\begin{document}
\title{On Distributed Computing with Beeps} \author{Y. M\'etivier,
  J.M. Robson and A. Zemmari} \institute{Universit\'e de Bordeaux - 
Bordeaux INP\\
  LaBRI UMR CNRS 5800\\ 351 cours de la Lib\'eration, 33405 Talence,
  France\\ \{metivier, robson, zemmari\}@labri.fr } \date{ }
\pagestyle{plain} \maketitle
\begin{abstract}
We consider networks of processes which interact with beeps. Various beeping 
models are used. The basic one, defined by Cornejo and Kuhn \cite{Cornejo10},
assumes that a process can choose either to beep or to listen; if it listens
it can distinguish between silence or the presence of at least one beep.
The aim of this paper is the study of the resolution of paradigms such as 
collision detection, computation of the degree of a vertex,
colouring, or $2$-hop-colouring  in
the framework of beeping models. For each of these problems we present
Las Vegas or Monte Carlo algorithms and we analyse their complexities
expressed    in terms of the number of slots.

We present also efficient randomised 
emulations  of more powerful beeping models on the basic one. We illustrate
emulation procedures with  an efficient degree computation algorithm
in the basic beeping model; this algorithm was given initially in a 
more powerful model.
\end{abstract}
{\bf keywords:} Beeping model, Collision detection, 
Colouring, $2$-hop-colouring, Degree computation, Emulation.

\section{Introduction}
\subsection{The problem}
Distributed graph algorithms are studied according to standard criteria
that are usually formulated: topological restriction (trees, rings, or
triangulated networks ...), topological knowledge (size, diameter
...), and local knowledge to distinguish nodes (identities, port
numbers).  Another important parameter of these algorithms is the
message size: no limit (local model), 
$O(\log n)$ (congest model, where $n$ is the size of the
graph) or $O(1)$.  For each of these criteria or parameters, we study
in particular the number of steps (rounds) 
necessary to obtain the result.
According to the hypotheses, solutions are deterministic or
randomised.

Typically, if we consider the MIS\footnote{Let $G=(V,E)$ be a graph.
  An independent set of $G$ is a subset $I$ of $V$ such that no two
  members of $I$ are adjacent. An independent set $I$ is maximal,
  denoted MIS, if any vertex of $G$ is in $I$ or adjacent to a vertex
  of $I.$} problem, when no identifiers are available there are only
randomised solutions.  Since the major contribution due to Luby
\cite{Luby}, this problem has been extensively studied with parameters
given above.  
 More recently, Afek et al. \cite{Afek13}, inspired by biological
observations, study the MIS problem through the beeping model: at each
step a vertex can either beep (emit a signal) or be silent, and if it is silent
it can distinguish between silence or the presence of at least one beep in its 
neighbouring.  This approach has been developed in several
papers \cite{Cornejo10,Schneider10,Afek13,HuangM13,Scott13} for
distributed problems such as MIS computation, (interval) colouring, conflict
resolution, membership problem etc.  

Let $G$ be a graph and let $v$ be a vertex of $G$; two kinds of collisions
may happen from the point of view of $v$:
\begin{itemize}
\item $v$ beeps and simultaneously at least one 
 neighbour of $v$ beeps, this collision
is called an internal collision;
\item at least 
two distinct neighbours of $v$ beep simultaneously, this collision 
is called a peripheral collision.
\end{itemize}

In this paper, we consider several variants of beeping models:
\begin{itemize}
\item if a process beeps, there are two cases:
\begin{enumerate}
\item it cannot know whether another process beeps simultaneously
 (see \cite{Cornejo10}), 
this case  is denoted
by $B$;
\item it can distinguish whether it beeped alone or if at least one neighbour
beeped concurrently, it is an internal collision; 
this case is called sender side collision detection  in
  \cite{Afek13} Section 6, and it is denoted in this paper $B_{cd}$;
\end{enumerate}
\item if a process listens, there are also two cases:
\begin{enumerate}
\item
it can distinguish between silence or the presence of at least one beep 
(see \cite{Cornejo10}), this
model is denoted $L$;
\item
  it can distinguish between silence or the presence of one beep or
   the presence of at least two beeps; in this case it is a peripheral
collision,
(see \cite{Schneider10},\cite{Afek13} Section 4), this model
is denoted $L_{cd}$  in this paper.
\end{enumerate}
\end{itemize}

Finally, a beeping model is defined by choosing between $B$ or $B_{cd}$ and
between $L$ and $L_{cd}$. For example the basic beeping model introduced
by Cornejo and Kuhn in \cite{Cornejo10} is $BL$; Afek et al. in \cite{Afek13}
(Section 6)
 and Scott et al. in \cite{Scott13} study the MIS  problem in the
model $B_{cd}L$. In Section 4 of \cite{Afek13}, Afek et al. study the MIS 
problem in $BL_{cd}$.
In this paper we present algorithms in models $BL$, $B_{cd}L$ and $B_{cd}L_{cd}$.

  Usually, the topology of a distributed system is modelled
by a graph and para\-d\-i\-g\-ms of distributed systems are represented by
classical problems in graph theory
such as  vertex degree,  maximal independent set (MIS for short),  
$2$-MIS  (we recall that a $2$-MIS of a graph $G$
is a MIS of the square of $G$, i.e., the graph with the set of vertices
of $G$ in which there is an edge between any two different
vertices $u$ and $v$ if the distance
between $u$ and $v$ in $G$ is at most $2$), colouring
(a colouring  of a graph $G$ assigns colours to vertices
such that two neighbours have different colours),
$2$-hop-colouring
(as for  a $2$-MIS,
a $2$-hop-colouring of a graph $G$ is a colouring of the square of $G$).
 Each solution to one of these problems is a
building block for many distributed algorithms: symmetry breaking,
topology control, routing, resource allocation or network
synchronisation. 

As explained in \cite{Peleg} (p. 79), a MIS or a colouring enables 
the construction of schedules such that two neighbouring vertices do not act
concurrently. Furthermore, a MIS can help for the decomposition of a network
into clusters. A $2$-MIS 
makes it possible
 to assign each vertex to exactly one leader.
Channel assignment for a radio network with collision-freedom corresponds
to a $2$-hop-colouring of the graph corresponding to the network  since
each colour corresponds to a channel \cite{KMR01}.
The importance of the $2$-hop-colouring is also attested by 
Emek et al. \cite{EPSW14}, they 
 prove that in an anonymous network 
any randomised algorithm can be seen as
the composition of a randomised $2$-hop-colouring and a deterministic
algorithm.
Finally, in an anonymous 
wireless network there are no port numbers, in this context a $2$-hop-colouring 
ensures that no node has two neighbours with the same colour, and colours act
as port numberings.

The aim of this work is the study of the resolution of 
these problems in the framework
of beeping models.

In this paper,  results on graphs
having $n$ vertices are expressed with high probability (w.h.p. for short),
meaning with probability $1-o(n^{-1})$.

Let $G$ be a graph and let $v$ be a vertex of $G$. 
We denote by $\Delta$ the maximum degree of $G$.
The neighbourhood of
$v$, denoted $N(v)$, is the set of vertices adjacent to $v$ (at distance
$1$ from $v$).
We define $\overline{N}(v)$ 
 by including $v$ itself in $N(v)$.
We use also the set of vertices at distance at most $2$ from $v$ called
the $2$-neighbourhood and denoted $N_2(v)$.
We write $\log n$ for the natural logarithm of $n$ and $\log_2 n$ for
the logarithm of $n$ to the base $2$.

\subsection{The Network Model}
We consider a wireless network model and
we follow definitions given in  \cite{Cornejo10} and in \cite{Afek13}.
The network is anonymous: unique identities are not available to
distinguish the processes. 
The network communications are
synchronous and encoded by a connected graph $G=(V,E)$ where the vertices
$V$ represent processes and the edges $E$ represent pairs of processes
that can hear each other.
We assume that all processes wake up and start computation at the same
step.
Time is divided into discrete synchronised time intervals, and
during each time interval all processors act in parallel and:
\begin{itemize}
\item beep or listen;
\item perform local computations.
\end{itemize}
Usually, in the message passing point to point model each interval is called
a round, and in the context of wireless network model each interval is
called a slot.
\begin{remark}\label{listen}
In general, vertices are active or passive. When they are active
they beep or listen; in the description of algorithms we say explicitely when 
a vertex beeps meaning that a non beeping active vertex listens.
\end{remark}

The time complexity, also called the slot complexity,
 is the maximum number of slots
needed until every vertex has completed its computation.

Algorithms are expressed with a for-loop or an until-loop;
in this paper, we call a phase one  execution of the body of the for-loop 
or of the until-loop.

\begin{remark}
An algorithm given in the beeping model induces an algorithm in the
message passing model; thus any lower bound on the round complexity
in the message passing model is a lower bound on the number of
slots in the beeping model.
\end{remark}

\subsection{Distributed Probabilistic Algorithm}
A probabilistic algorithm is an algorithm which makes some random choices
 based on some given probability distributions. 

A distributed probabilistic algorithm is a collection of local probabilistic
algorithms. The network is anonymous, and 
processes have no information on their  degrees;
thus their local probabilistic algorithms
are identical and have the same probability distribution.

A Las Vegas algorithm  is a probabilistic algorithm which terminates
with a positive probability (in general $1$) and always produces
 a correct result.

A Monte Carlo algorithm is a probabilistic algorithm which always
terminates; nevertheless the result may be incorrect with a certain
probability.

\subsection{Our Contribution}
Classical considerations on symmetry breaking in anonymous 
beeping networks, see for example
\cite{Afek13} (Lemma 4.1) , imply that:
\begin{remark}
There is no Las Vegas internal collision
detection algorithm in the beeping models $BL$ and $BL_{cd}$.
There is no Las Vegas peripheral  collision
detection algorithm in the beeping models $BL$ and $B_{cd}L$.
\end{remark}

Finally, a first contribution may be summarised by the following table.
\vskip 0.3cm

\begin{minipage}{\linewidth}
\begin{center}
\begin{tabular}{|l*{4}{|c}|}
\hline
\backslashbox{\parbox[l]{1.5cm}{Problem}}{\parbox[l]{1.5cm}{Model}}
 & $BL$ & $B_{cd}L$ & $BL_{cd}$ & $B_{cd}L_{cd}$\\
\hline
Collision Detection & MC & MC&MC  &LV \\
\hline
Degree & MC &MC &MC & LV \\
\hline
Colouring & MC & LV & MC & LV \\
\hline
2-colouring & MC & MC & MC & LV \\
\hline
\end{tabular}

\vspace{0.5cm}
MC means there exists a Monte Carlo algorithm and there exists no 
Las Vegas algorithm.

 LV means there exists a Las Vegas algorithm.
\bigskip
\end{center}
\end{minipage}

\subsubsection{Collision Detection.}
We present and analyse  very simple Monte Carlo 
procedures which detect internal and peripheral 
collisions in the beeping model
 $BL$.

Let  $G$ be a graph and let $v$ be a vertex of $G$.
According to the initial knowledge (error probability $\epsilon$
and/or 
the size of the graph),
we prove that, given $0<\epsilon <1$,
any collision in   $N(v)$ 
is detected in  
$O\left(\log(\frac 1 \epsilon)\right)$ slots with an error probability upper
 bounded by  $\epsilon$ or in $O(\log n)$ slots with an error probability
 $1-o(\frac 1 {n^2})$.
Any collision in $G$ is detected
in $O(\log (\frac n \epsilon))$ slots with an error probability upper bounded
by $\epsilon$ 
and in $O(\log n)$ slots with probability $1-o\left(\frac 1 {n}\right)$,
i.e., w.h.p.

\subsubsection{Colouring and 
$2$-hop-colouring Algorithms.}
Algorithms for   colouring and $2$-hop-colouring are based
on a repeat-loop whose body has three parts:
\begin{enumerate}
\item a vertex is candidate  to a colour
and beeps 
with a certain probability which can change after
each iteration,
\item a candidate vertex tries to detect whether it is the only
candidate  or not
in $\overline{N}(v)$  or $N_2(v)$,
\item according to the conclusion, it informs its neighbours (and
possibly neighbours of its neighbours) and
may adjust its probability to be once again candidate.
\end{enumerate}

We present and analyse a Las Vegas colouring
algorithm in the model $B_{cd}L$; its slot complexity is 
$76 \log_2 n + 112\Delta$. We present also a  $2$-hop-colouring Las Vegas
algorithm in the model $B_{cd}L_{cd}$; its slot complexity
is  $5\times(76 \log_2 n + 112\Delta^2)$.
  In both cases algorithms need
no knowledge on $G$.

In the case where we know an upper bound $K$ on the maximum degree 
of the graph we provide a colouring algorithm with colours
bounded by $K+1$ and with a slot complexity equals to 
$O\left( K(\log n +\log^2K)\right)$.
\subsubsection{Emulation.}
Based on results of the section devoted to collision detection we
propose emulation procedures of $B_{cd}$ and of $L_{cd}$ in $BL$. 
Let $G$ be a graph.
Each beep or listen is emulated by 
 $k=2\times\lceil \log_2\left(\frac n\varepsilon\right) \rceil$ slots, 
and 
 the procedures are correct on $G$ with
probability $1-\varepsilon$, 
or by  $k=2\times\lceil \log_2\left(\frac 1 \varepsilon\right) \rceil$
slots, and,
 for any vertex $v$, the procedures are correct on $v$ 
with probability $1-\varepsilon$, or by
 $k = 2\times\lceil 2\log_2 n\rceil$ slots,  and 
 the procedures are correct on $G$ w.h.p. 
Finally, emulation procedures induce a logarithmic multiplicative factor for the
slot complexity.

\subsubsection{Degree Computation.}
First,
we deduce from the $2$-hop-colouring a Las Vegas degree computation algorithm
in $B_{cd}L_{cd}$; its slot complexity is  $5\times(76 \log_2 n + 112\Delta^2)$.

We illustrate emulation procedures by applying them
to the degree computation algorithm given in $B_{cd}L_{cd}$ and
we obtain a Monte Carlo algorithm for
 the computation of the degrees of each vertex in $BL$.
For any graph $G$ of size $n$,
the new algorithm computes the degrees in $G$ in 
$O\left((\log n+\Delta^2)\log n\right)$, and the result is correct 
w.h.p.

\begin{remark}
For some problems, the design of some algorithms is more natural and easier
in $B_{cd}L_{cd}$ than in $B_{cd}L$ or  is more natural and easier
in $B_{cd}L$ than in $BL$. In these cases emulation procedures enable safe and
automatic translations of algorithms given in a strong model into a weaker
model.
\end{remark}

\begin{minipage}{\linewidth}
\begin{tabular}{|p{2.6cm}|p{2.cm}|p{4.cm}|p{4.2cm}|p{2.2cm}|}
\hline

Problem &Beeping model &Time (number of slots) & Information required at each node &  error probability \\

\hline

Collision detection in $N(v)$ &$BL$ & $O\left(\log(\frac 1 \epsilon)\right)$ &  $\epsilon$ & Monte Carlo  at most  $\epsilon$ \\
\hline

Collision detection in $N(v)$  &$BL$ & $O(\log n)$ &  
size of the graph &Monte Carlo  $o\left(\frac 1 {n^2}\right)$ \\
\hline

Collision detection in $G$  &$BL$ & $O\left(\log(\frac
n \epsilon)\right)$ &  
size of the graph and $\epsilon$ & Monte Carlo  at most $\epsilon$ \\
\hline

Collision detection in $G$  &$BL$ & $O\left(\log n\right)$   &  
size of the graph & Monte Carlo  $o\left(\frac 1 {n}\right)$ \\
\hline
MIS  \cite{Scott13} & $B_{cd}L$  &  $O(\log n)$ & none& 
Las Vegas \\
\hline
Colouring \cite{Cornejo10} & $BL$    & never stops stabilisation 
w.h.p. in $O(\Delta \log n)$ & Each node knows its degree
and an upper bound of  $\Delta$ & Monte Carlo \\
\hline
Colouring  &  $B_{cd}L$  & $O(\log n+\Delta)$ 
w.h.p. & none & Las Vegas \\
\hline
Colouring  &  $B_{cd}L$  & $O\left( K(\log n +\log^2K)\right)$ 
w.h.p. & An upper bound $K$ on the maximum degree  of $G$ & Las Vegas \\
\hline
$2$-colouring  &  $B_{cd}L_{cd}$  & $O(\log n +\Delta^2)$ 
w.h.p. & none & Las Vegas \\
\hline
Degree computation  &  $B_{cd}L_{cd}$  & $O(\log n +\Delta^2)$ 
w.h.p. & none & Las Vegas \\
\hline
Degree computation  &  $BL$  &$O\left((\log n + \Delta^2)(\log(\frac{n}{\varepsilon}))\right)$ 
 & size of the graph and $\varepsilon$ & Monte Carlo at most $\varepsilon$\\
\hline
Degree computation  &  $BL$  &$O\left((\log n+\Delta^2)\log n\right)$
 & size of the graph & Monte Carlo $o(\left( \frac 1 n \right))$\\
\hline
\end{tabular}\par
\vspace{0.5cm}
Beeping algorithms on graphs with $n$ vertices.
\bigskip

\end{minipage}

\subsection{Related Work}

As explained by Chlebus \cite{C01}, in a radio network, a vertex can hear
a message only if it was sent by a neighbour and this neighbour was the only
neighbour that performed a send operation in that step. If no message
has been sent to a vertex then it hears the background noise. If a vertex
$v$ receives more than one message then we say that a collision occurred at the
vertex $v$ and the vertex hears the interference noise. If vertices
of a network can distinguish the background noise from the interference noise
then the network is said to be with collision detection, otherwise it is
without collision detection (see for example  the Wake-up problem or 
the MIS problem for radio networks 
in \cite{GPP01,Moswa,CGK07,JK15} where vertices do not 
make the difference  between no neighbour sends a message and at least
two neighbours send a message; see also the broadcasting problem
in radio network in \cite{GHK13} where vertices make the difference between
no neighbour sends a message, exactly one neighbour send a message and 
at least two neighbours send a message).
In this context, an efficient randomised
emulation of single-hop radio network with collision detection on
multi-hop radio network without collision detection is presented and
analysed in \cite{BGI91}.
To summarise:
\begin{remark}
Detecting  a collision in a radio network is to be able to distinguish
between $0$ message and at least $2$ messages while detecting a collision
in the beeping model is to be able to distinguish between $1$ message
and at least $2$ messages.
\end{remark}

Thus, from now on, we consider collisions as explained above for
beeping models.  Our collision detetection algorithm 
and  the degree computation algorithm use
similar ideas to those used for initialising a packet radio network
\cite{HNO99} or for election in a complete graph with wireless
communications \cite{BW12} (Algorithm 50, p. 132).  The impact of
collision detection is studied in \cite{Schneider10,KP13}, where it is
proved that performances are improved, and in certain cases the
improvement can be exponential.  The complexity of the conflict
resolution problem (the goal is to let every active vertex use the
channel alone (without collision) at least once) is studied in
\cite{HuangM13} (they assume that vertices are identified), and an
efficient deterministic solution is presented and analysed.

General considerations and many examples of Las Vegas distributed
algorithms related to MIS or colouring can be found in \cite{Peleg}.
The computation of a MIS has been the object of extensive research on
parallel and distributed complexity in the point to point message
passing model \cite{Alonco,Luby} \cite{Awerbuchco,Linial}; Karp and
Wigderson \cite{KarpW} proved that the MIS problem is in NC.
Some links with distributed graph colouring and some recent results on
this problem can be found in \cite{Kuhnco}.  The complexity of some
special classes of graphs such as growth-bounded graphs is studied in
\cite{Kuhnmo}.  Results have been obtained also for radio networks
\cite{Moswa}.  A major contribution is due to Luby \cite{Luby}.  He
gives a Las Vegas distributed algorithm.  The main idea is to obtain
for each vertex a {\it local total order} or a {local election} which
breaks the local symmetry and then each vertex can decide locally
whether it joins the MIS or not.  Its time complexity is $O(\log n)$
and its bit complexity is $O(\log^2 n).$ Recently, a Las Vegas
distributed algorithm has been presented in \cite{MRSZ11} which
improved the bit complexity: its bit complexity is optimal and equal
to $O(\log n)$ w.h.p.  
An experimental comparison between 
\cite{Luby} and \cite{MRSZ11} is presented in \cite{BK13}.
If we remove the constraint on the size of
messages or on the anonymity recent new results have been obtained for
distributed symmetry breaking (MIS or colouring) in
\cite{KothaP11,BarenEPS12,BarenE13,BarenE14}.

Afek et al. \cite{Afek13}, 
 from considerations concerning the development of
certain cells, studied the MIS problem in the discrete beeping model 
 $BL$
as presented in \cite{Cornejo10}. They consider, in particular,
the wake-on-beep model (sleeping nodes wake up upon receiving a beep) and
sender-side collision detection $B_{cd}L$: they give an $O((\log n)^2)$ rounds 
MIS algorithm. After this work, 
Scott et al. \cite{Scott13} presents in the model $B_{cd}L$ 
a randomised algorithm
with feedback mechanism  whose expected time to compute a MIS is
$O(\log n)$.  A vertex $v$ is candidate for joining the independent set
(and beeps) with a certain probability 
(initially $1/2$); this value is decreased by some fixed factor if at
least one neighbour whishes also to join the independent set. It is increased
by the same factor (up to maximum $1/2$) if neither $v$ nor
any  neighbour of $v$ are candidates.

More generally, Navlakha and Bar-Joseph present in \cite{NB15}
a general survey on 
similarities and differences between distributed computations in biological
and computational systems and, in this framework, the importance
of the beeping model.

In the model of point to point 
message passing,
vertex colouring is mainly studied under two assumptions: - vertices
have unique identifiers, and more generally, they have an initial
colouring, - every vertex has the same initial state and initially
only knows its own edges.  If vertices have an initial colour, Kuhn
and Wattenhofer \cite{Kuhnco} have obtained efficient time complexity
algorithms to obtain $O(\Delta)$ colours in the case where every
vertex can only send its own current colour to all its neighbours.  In
\cite{Johansson}, Johansson analyses a simple randomised distributed
vertex colouring algorithm for anonymous graphs.  He proves that this
algorithm runs in $O(\log n)$ rounds w.h.p. on graphs
of size $n.$ The size of each message is $\log n,$ thus the bit
complexity per channel of this algorithm is $O(\log^2 n).$
\cite{MRSZ10} presents an optimal bit and time complexity Las Vegas
distributed algorithm for colouring any anonymous graph in $O(\log n)$
bit rounds w.h.p.

In \cite{Cornejo10}, Cornejo and Kuhn study the interval colouring problem:
an interval colouring assigns to each vertex an interval (contiguous fraction)
of resources such that neighbouring vertices do not share resources
(it is a variant of vertex colouring). They assume that each node
knows its degree and an upper bound of the maximum degree $\Delta$
of the graph. They present in the  beeping
model $BL$ a probabilistic  algorithm which never stops
and stabilises with a correct  $O(\Delta)$-interval coloring  in
$O(\log n)$ periods  w.h.p., where:  $n$ is 
the size of the graph, and a period is $Q$ time
slots with $Q\geq \Delta$, thus it stabilises in $O(Q(\log n))$ slots.

Kothapalli et al. consider the family of anonymous rings and show in
\cite{KOSS} that if only one bit can be sent along each edge in a
round (point to point message passing model), 
then every Las Vegas distributed vertex colouring algorithm (in
which every node has the same initial state and initially only knows
its own edges) needs $\Omega(\log n)$ rounds w.h.p. to
colour the ring of size $n$ with any finite number of colours.
Kothapalli et al.  consider also the family of oriented rings and they
prove that the bit complexity in this family is $\Omega(\sqrt{\log
  n})$ w.h.p.

\cite{FMRZ13} presents and analyses Las Vegas distributed algorithms which compute
a MIS or a maximal matching for anonymous rings (in the point to 
point message passing model).  Their bit complexity
and time complexity are $O(\sqrt{\log n})$ w.h.p.

Emek and Wattenhofer introduce in \cite{EmekW13}
a model for distributed computations which resembles the beeping model:
networked finite state machines (nFSM for short). This model enables the sending
of the same message to all neighbours of a vertex; however it is asynchronous,
the states of vertices belong to a finite set, the degree of vertices is bounded
and the set of messages is also finite. In the nFSM model they give a
$2$-MIS algorithm for graphs of size $n$ using a set of messages of size $3$
with a time complexity equal to $O({\log n}^2).$

\section{A Monte Carlo Collision Detection Algorithm 
in $BL$}\label{collision}

If we consider the  beeping models presented in the Introduction, clearly
the weakest is $BL$.
This section presents simple and efficient
probabilistic procedures for detecting collisions
by using  $BL$. Later (Section \ref{emuler}) we will see how 
 to emulate  $B_{cd}$
or  $L_{cd}$ in $BL$.

A phase $P$ is  the sequence of the 3 following actions:
\begin{itemize}
\item vertices wishing to beep, randomly and uniformly select 0 or 1;
\item slot $1$: vertices that have drawn 0 beep, the others listen;
\item slot $2$: vertices that have drawn  1 beep, the others listen.
\end{itemize}

A vertex detects a collision if:
\begin{itemize}
\item it does not beep and it hears beeps at two slots 
in a phase,
\item or if it beeps itself at a slot of a phase and hears a beep
at the other slot of the same phase.
\end{itemize}
We address two questions:

Let $0<\epsilon<1$, how many phases must each vertex execute to decide
whether  there is a collision or not in its neighbourhood 
with an error probability bounded by
$\epsilon$?

Let $0<\epsilon<1$, how many phases must each vertex execute to ensure
that whether  there is a collision or not over all the graph $G$
is detected with an error
probability bounded by $\epsilon$?


\begin{algorithm}[h]
\caption{Collision Detection Algorithm in $BL$ - according to the desired
 error probability
and the knowledge of vertices,
$k=\lceil {\log_2(\frac 1 \epsilon)}\rceil +1 $ or 
$k=\lceil {2\log_2(n)}\rceil +1$ or
$k=\lceil {\log_2(\frac n \epsilon)} \rceil +1$.}
\textbf{Var:}\\
\hspace*{1cm}$k:$ \textbf{Global integer constant;}\\
\hspace*{1cm}$collision:$ $boolean$ \textbf{Init } $false$;\\
\hspace*{1cm}$i:$ $Integer;$\\
\hspace*{1cm}$b:$ \textbf{in} $\{0,1\}$;\\
\For{$i:=1,k$}{
\If{$v$ wishes to beep}
   {Choose $b$ uniformly at random from $\{0,1\}$; \\
   \If{$b=0$}
      {slot 1 beep; slot 2 listen}
      \Else{slot 1 listen; slot 2 beep;}
    \If{a beep was heard}{$collision:=true$}
   }
   \Else{slot 1 listen; slot 2 listen;\\
   \If{two beeps were heard}{$collision:=true;$}
   }
}
\end{algorithm}

We have:
\begin{lemma}
\label{lemma::collision_local}
Let $G$ be a graph having $n$ vertices.
Let $v$ be any vertex. Let $0<\epsilon<1$.
 Any collision in the neighbourhood of $v$ is
detected in $O\left(\log_2(\frac 1 \epsilon)\right)$ phases (slots)
with
probability at least $1-\epsilon,$ and in $O\left(\log_2 n\right)$ phases
(slots) with probability $1-o\left(\frac 1 {n^2}\right)$.
\end{lemma}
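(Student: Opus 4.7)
The plan is to fix a vertex $v$ and analyse, within a single phase, the probability that $v$ itself detects a collision in $\overline{N}(v)$; then independence across phases turns a constant per-phase success probability into the claimed $k$-fold exponential bound.

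First I would fix the set $S$ of vertices in $\overline{N}(v)$ that wish to beep and split into the two collision types. In the internal case, $v\in S$ and there are $m\geq 1$ neighbours of $v$ in $S$. In a given phase $v$ picks a bit $b_v$ uniformly and, by the algorithm, beeps in one slot and listens in the other. The detection rule sets $\mathit{collision}:=\mathit{true}$ at $v$ exactly when at least one of those $m$ neighbours picks the bit $1-b_v$. The probability that all $m$ neighbours mimic $v$'s bit is $2^{-m}\leq 1/2$, so $v$ detects with probability at least $1/2$. In the peripheral case, $v\notin S$ and $m=|S\cap N(v)|\geq 2$. Now $v$ listens in both slots and detects exactly when the $m$ neighbours do not all pick the same bit; the probability that they agree is $2\cdot 2^{-m}=2^{1-m}\leq 1/2$, so again the detection probability is at least $1/2$.

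Second, I would note that the coins flipped in distinct phases are mutually independent, and the event ``$v$ detects in phase $i$'' has probability at least $1/2$ independently of the previous phases. Hence the probability that $v$ fails to detect during all $k$ phases is at most $(1/2)^k$. With $k=\lceil \log_2(1/\epsilon)\rceil+1$ this is at most $\epsilon/2<\epsilon$, giving the first statement, and with $k=\Theta(\log n)$ (e.g.\ $k=\lceil 2\log_2 n\rceil+1$, or slightly larger if one wants a genuine $o(1/n^{2})$ bound) the failure probability is at most $1/n^{c}$ for $c>2$, which is $o(1/n^{2})$; in both cases the slot count is twice the phase count, so the complexities of $O(\log_2(1/\epsilon))$ and $O(\log n)$ slots follow.

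The only real subtlety, and the step I would check most carefully, is the uniform lower bound of $1/2$ on the per-phase detection probability: it is tight in the two ``worst'' configurations ($m=1$ for an internal collision and $m=2$ for a peripheral one), and it has to be verified that no case gives a smaller value. Everything else is a routine application of independence and the choice of $k$.
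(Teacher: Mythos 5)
Your proof is correct and follows essentially the same route as the paper: a per-phase detection probability of at least $1/2$, independence across phases giving a failure probability of at most $(1/2)^k$, and the stated choices of $k$. The only difference is cosmetic — the paper lower-bounds the detection probability by tracking a single colliding pair $(u_1,u_2)$ choosing different slots, whereas you compute the exact failure probability over all $m$ beeping neighbours; both yield the same bound of $1/2$ in the worst case.
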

\begin{proof}
Let $v$ be any vertex having $d(v)\geq 1$ neighbours. If a collision
happens between $u_1$, which is either $v$ or a neighbour of $v$ and $u_2$, 
a neighbour of $v$, then it will be detected if and only if $u_1$  chooses 
a slot different from $u_2$. This happens with probability $1/2$.

Thus, the probability that a collision happens and is not detected in
the neighbourhood of $v$ within next $k$ phases is at most
$\left(\frac 1 {2}\right)^k$. This probability is then less than
$\epsilon$ (resp. less than $o\left(\frac 1 {n^2}\right)$) for any
$k>\log_2(\frac 1 \epsilon)$
(resp.  $k>2\log_2(n)$), which ends the proof.
\qed
\end{proof}
Yielding:
\begin{corollary}
Let $G$ be a graph having $n$ vertices.
Any collision in $G$ is detected after at most $O\left(\log_2(\frac
n \epsilon)\right)$ phases (slots) with probability at least $1-\epsilon$, and
after at most $O\left(\log_2 n\right)$ phases (slots) with probability
$1-o\left(\frac 1 {n}\right)$.
\end{corollary}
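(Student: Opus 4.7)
The plan is to lift the single-vertex guarantee of Lemma~\ref{lemma::collision_local} to a whole-graph guarantee by a straightforward union bound. The setup is already in place: for each fixed vertex $v$ and each run of the phase $P$, an undetected collision at $v$ requires that whichever pair of beeping vertices is responsible for the collision happened to select the same slot in $\{0,1\}$, an event of probability $1/2$ per phase, independently across phases.

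First I would invoke Lemma~\ref{lemma::collision_local} but with the per-vertex target error set to $\epsilon/n$ rather than $\epsilon$. The lemma's bound then yields that running $k = \lceil \log_2(n/\epsilon)\rceil + 1$ phases suffices to detect any collision in $N(v)$ with probability at least $1 - \epsilon/n$. A union bound over the $n$ vertices of $G$ shows that with probability at least $1 - n \cdot (\epsilon/n) = 1 - \epsilon$, every vertex that has a collision in its neighbourhood detects one; since any collision in $G$ is, by definition, a collision in $N(v)$ for at least one vertex $v$, this gives detection over all of $G$ with probability at least $1 - \epsilon$, in $O(\log_2(n/\epsilon))$ slots.

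For the second bound, I would run the algorithm with $k = \lceil 2\log_2 n\rceil + 1$ phases. Lemma~\ref{lemma::collision_local} then guarantees that the per-vertex failure probability is $o(1/n^2)$, and a union bound over the $n$ vertices gives an overall failure probability of $n \cdot o(1/n^2) = o(1/n)$, still in $O(\log_2 n)$ slots. Both statements of the corollary follow.

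There is no real obstacle here: the only thing to be slightly careful about is the choice of constant inside the $O(\cdot)$ when translating ``per-vertex error $\epsilon/n$'' into a number of phases, but this is absorbed into the constant factor of the $O(\log_2(n/\epsilon))$ and $O(\log_2 n)$ bounds. The independence needed for the per-vertex claim is provided by Lemma~\ref{lemma::collision_local}; the union bound needs no independence across vertices, so nothing further has to be argued.
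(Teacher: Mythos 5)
Your proposal is correct and follows essentially the same route as the paper: the paper likewise writes the whole-graph detection time as the maximum over vertices of the per-vertex detection times and applies a union bound, which amounts exactly to your invocation of Lemma~\ref{lemma::collision_local} with per-vertex error $\epsilon/n$ (resp.\ $o(1/n^2)$). No gaps.
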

\begin{proof}
Assume a collision occurs at time $t_0$ in $G$ and let $T$ denote the number
of phases before it is detected in the whole graph. Clearly $T=\max\{
T_v\mid v\in V\}$, where $T_v$ denotes the time before a node $v$
detects a collision in its neighbourhood and then:

\begin{eqnarray}
{\mathbb P}r
\left(
T>\log_2\left(\frac n \epsilon\right)\right)
& \leq & n\times 
{\mathbb P}r
\left(
T_v>\log_2\left(\frac n \epsilon\right)\right)\\
& = & n\times \frac 1 {2^{\log_2(\frac n \epsilon)}}= \epsilon.
\end{eqnarray}
Which proves the first claim. The same argument, combined with the
second claim of Lemma \ref{lemma::collision_local} proves the second
claim of the corollary.  \qed
\end{proof}
These results can be  summarised by  Algorithm 1 (Monte Carlo).

\section{Colouring Algorithms}
\subsection{A Las Vegas 
Colouring Algorithm in $B_{cd}L$ without 
any knowledge}\label{colouring}

This section presents and analyse a Las Vegas colouring algorithm
in the model $B_{cd}L$
assuming that the vertices have no knowledge.

Initially each vertex is active.
Each active vertex $v$ maintains a parameter $p$, its ``beeping
probability'' initially equal to $1/2$.  
It maintains also a counter, denoted colour (initially equal to $0$),
 that is incremented at each iteration.
In each phase each active
vertex decides with probability $p$ to beep, indicating that it
is a candidate to the current colour given by the counter.
It succeeds and its colour is the value of the counter if and only if no
neighbour has also beeped; in this case its state becomes coloured.
Then after this slot,
 if $v$ is still active, it adjusts
$p$, halving it if any neighbour beeped and doubling it if no
neighbour beeped and it is not already $1/2$.
If a neighbour has
beeped we say that $v$ is ``inhibited''.

\begin{algorithm}\label{Scott}
\caption{A Las Vegas colouring algorithm without any knowledge in $B_{cd}L$.}
\textbf{Var:}\\
\hspace*{1cm}$state\in\{active,coloured\}$ \textbf{Init } $active$; \\
\hspace*{1cm}$candidate:$ $Boolean$;\\
\hspace*{1cm}$p:$ $real$ \textbf{Init } $1/2$;\\
\hspace*{1cm}$colour:$ $Integer$ \textbf{Init } $0$;\\
\Repeat{$state=coloured$}{
  $colour:=colour+1$;\\
        set $candidate$ to $true$ with probability $p$ else $false;$\\
        \If{$candidate$}{beep}
        \If{$candidate$ and no internal collision}{
          $state:=coloured$}
         \If{$state=active$}{
            \If{$not$ $candidate$ and no beep heard}{
            \If{$p<1/2$}{$p:=2\times p$}}
            \Else{$p:=p/2$}
}
}
\end{algorithm}
\begin{remark}
At the end of the body of the until-loop, we can add a slot which 
enables an uncoloured vertex  to beep and finally a couloured vertex
can detect the local termination of the colouring algorithm.
\end{remark}

We first introduce some notation that we will use in this proof.
For any vertex $v$, $p_v$ denote the parameter $p$ on the vertex 
$v$ and we define the following sum:
$$ q_v = \sum_{u\in N(v)}p_u.
$$ 

We also note $q_v^* = \max\{q_v,1/5\}$ and finally $t_0 = 3 \log_2
(5q_v^*)- 2 \log_2 p_v$. We omit the subscript $v$ where there is no
risk of ambiguity.

We finally write $l(q)$ for $\log_2 (5\max\{q,1/5\})$, that is
$l(q)=\max\{\log_2(5q),0\}$.

Recall that $\overline{N}(v)$ is the  set of vertices at distance
less than or equal to $1$ from vertex $v$.

Then, we have the following theorem:
\begin{theorem}
For any $t\ge 0$ and for any vertex $v$, its probability of remaining
active after the next $t$ phases is at most $\alpha^{112d(v)+t_0-t}$ for the
constant $\alpha=2^{1/36} \approx 1.01944$, where $d(v)$ is the degree of $v$
in the residual graph.
\end{theorem}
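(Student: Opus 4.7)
My plan is a potential-function argument. Define the potential at time $s$ (counted from the start of the $t$ phases) by
\[
\Phi_s \;=\; 112\,d(v) \;+\; 3\,l(q_v) \;-\; 2\log_2 p_v,
\]
so that $\Phi_0=112\,d(v)+t_0$ in the notation of the theorem. Because $p_v\le 1/2$ forces $-2\log_2 p_v\ge 2$ while $d(v)\ge 0$ and $l(q_v)\ge 0$, we always have $\Phi_s\ge 0$. Let $X_s=\alpha^{\Phi_s}\mathbf{1}_{\{v\text{ active at }s\}}$, with the convention $X_s=0$ once $v$ is coloured. Since $\mathbf{1}_{\{v\text{ active}\}}\le X_s$, the theorem reduces to proving the one-step supermartingale inequality
\[
E[X_{s+1}\mid \mathcal{F}_s] \;\le\; \alpha^{-1}\,X_s,
\]
and iterating it $t$ times to obtain $\Pr[v\text{ active at }t]\le E[X_t]\le \alpha^{\Phi_0-t}$.

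To establish the one-step drift I condition on $v$ being active at time $s$ and enumerate the four mutually exclusive outcomes of a phase: (A) $v$ beeps and no neighbour beeps, so $v$ is coloured and $X_{s+1}=0$; (B) $v$ beeps and some neighbour beeps; (C) nobody in $\overline N(v)$ beeps; (D) $v$ is silent and some neighbour beeps. The inequality $(1-p)\ge 4^{-p}$ valid on $[0,1/2]$ yields $\Pr(\mathrm A)\ge p_v\cdot 4^{-q_v}$. In (B) and (D) the update halves $p_v$ so $-2\log_2 p_v$ gains $+2$; in (C) it doubles $p_v$ up to $1/2$, losing at most $2$; and in every outcome $112\,d(v)$ drops by $112$ times the number of neighbours coloured that phase. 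I then split on the size of $q_v$. In the \emph{light} regime $q_v\le 1/5$ we have $l(q_v)=0$ and $\Pr(\mathrm A)\ge p_v\cdot 4^{-1/5}$ is a constant fraction of $p_v$; the $\alpha^{-\Phi_s}$ saved on event (A) dominates the worst $\alpha^{+2}$ inflation from halving in (B) or (D), precisely because $\alpha=2^{1/36}$ is chosen so that $\alpha^{2}$ is tiny. In the \emph{heavy} regime $q_v>1/5$ the point is that many neighbours beep with substantial probability and halve their own $p_u$, so $q_v$ contracts geometrically in expectation; the coefficients $3$ on $l(q_v)$ and $112$ on $d(v)$ are calibrated so that the expected decrease of these two terms absorbs the possible $+2$ in $-2\log_2 p_v$ and leaves a net exponent drift of at most $-1$.

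The hard part will be controlling the expected change in $3\,l(q_v)$ in the heavy regime, because $q_v=\sum_{u\in N(v)}p_u$ mixes the correlated updates of many neighbours whose new $p_u$ depends on beeps heard in $N(u)$, possibly outside $\overline N(v)$. My plan is to bound $\alpha^{3\,l(q_{v,s+1})}$ pointwise by a product of per-neighbour factors, exploiting the sub-additivity of $l(q')-l(q)$ in the individual log-changes $\log_2(p_{u,s+1}/p_{u,s})$ in the log regime, and then take expectations factor by factor using only the marginal distribution of each $p_u$ update. Each per-neighbour factor is controlled by applying the same four-event case analysis at $u$ rather than at $v$, contributing an independently boundable exponential moment. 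Once that one-phase inequality is verified with the explicit constants $3$ and $112$, the induction and the Markov-type conclusion are immediate.
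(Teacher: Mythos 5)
Your outer skeleton is sound and is essentially the paper's own: the potential $\Phi=112d(v)+3l(q_v)-2\log_2 p_v$ with $\alpha=2^{1/36}$, the observation $\Phi\ge 0$, and the reduction to a one-step inequality $E[X_{s+1}\mid\mathcal F_s]\le\alpha^{-1}X_s$ iterated $t$ times is exactly the paper's induction on $t$ in multiplicative form. The gap is in the one-step inequality itself, and it is the heart of the matter. The evolution of $q_v$ and of $d(v)$ is driven by beeps of vertices \emph{outside} $\overline N(v)$, and these two terms are coupled in a way your plan cannot decouple: an external beep at a neighbour $u$ \emph{helps} the $3l(q_v)$ term (it forces $p_u$ to halve, contracting $q_v$) but \emph{hurts} the $112d(v)$ term (it inhibits $u$ from taking the colour, so $d(v)$ does not drop by $112$). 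Bounding each per-neighbour factor ``using only the marginal distribution of each $p_u$ update'' cannot resolve this trade-off, because the marginal that is worst for one term is best for the other; the paper needs a dedicated lemma (with a four-subcase computation balancing $\alpha^{112}$ against the loss in $q$, using $\alpha^{108}=8$ and bounds like $q'\ge q/4$ and $q'\ge q-3p_u/2$) to prove that the extremal environment is ``every neighbour externally inhibited,'' after which $q'=q/2$ deterministically and only five cases on $(q,p)$ remain. Your proposal has no substitute for this lemma. A secondary problem: taking expectations ``factor by factor'' over neighbours requires independence or negative association of the updates of distinct $u$'s, which fails since they share $v$ and common neighbours as beep sources; and a sub-additive pointwise bound on $l(q')-l(q)$ that keeps only positive log-changes discards exactly the contraction of $q$ that supplies the entire drift when $q\ge 2/5$ (there the budget is $-3$ from $l(q/2)=l(q)-1$ paying for $+2$ from $p_v$ halving and $+1$ for the elapsed phase).

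A smaller but real inaccuracy: in your ``light'' regime $q\le 1/5$ you attribute the negative drift to event (A), but $\Pr(\mathrm A)\approx p_v(1-q)$ can be arbitrarily small when $p_v\ll 1/2$, while event (D) still has probability up to $q$. In that subcase the drift actually comes from event (C) doubling $p_v$ (the paper's inequality $4\alpha^{-1}+\alpha^3<5$); event (A) is only the operative mechanism when $p_v=1/2$ and doubling is unavailable. So the case split must be on $p_v=1/2$ versus $p_v<1/2$ as well as on the size of $q$, as in the paper's five cases. In summary: same potential and same induction, but the decisive worst-case-environment lemma is missing and the proposed factorization argument would not close the one-step inequality.
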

Note that $\alpha^{3\log_2q}=q^{3\log_2\alpha}=q^{1/12}$.  The proof
will be by induction on $t$. We have $t_0\ge 2$, so that if $t=0$,
$\alpha^{t_0-t} >1$ and the claim is trivially true.

Let $t>0$. After one phase which does not colour $v$
we have by induction that the probability of remaining active
for the following $t-1$ phases is at most $\alpha^{112d^\prime(v)+t^\prime_0-t+1}$
where $t^\prime_0$ is the new value of $t_0$, namely $ 3
l(q^{\prime})- 2\log_2 p^\prime$ and $d^\prime(v)$ is the new degree.
So we conclude that the probability
of survival is upper bounded by the mean of the random variable which
is $\alpha^{112d^\prime(v)+t^\prime_0-t+1}$ if $v$ survives the first phase and $0$
otherwise. We refer to this mean as the {\it bound} and note that it
is dependent on what happens outside the neighbourhood of $v$.

We will come back to the proof of the Theorem, but we first prove the
following lemma:
\begin{lemma}
The bound is maximised when what happens outside the neighbourhood of
$v$ is that every neighbour $u$ of $v$ is inhibited from taking the
current colour by an external neighbour beeping.
\end{lemma}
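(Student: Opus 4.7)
My plan is to prove the lemma by a one-neighbour-at-a-time monotonicity argument. Starting from an arbitrary external configuration, I would show that for any single $u \in N(v)$, modifying the configuration so that $u$ additionally hears some external beep does not decrease the bound; iterating this flip over all of $N(v)$ then reaches the configuration named in the lemma. Formally, I would condition on all internal randomness in $\overline{N}(v)$ (the candidacy flags $c_w$ for $w \in \overline{N}(v)$) together with all external events reaching neighbours other than $u$, and compare the bound under $B_u=F$ and $B_u=T$, where $B_u$ denotes ``$u$ hears at least one external beep''.

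The first observation is a reduction: if some $w \in \overline{N}(v) \setminus \{u\}$ already beeps at $u$, then whether an external beep is also received does not change $u$'s next state, $d'(v)$, $q'$ or $p'_v$, so the two bounds agree. The substantive case is when no such internal beep reaches $u$. If $c_u = 1$, moving from $B_u=F$ to $B_u=T$ turns $u$ from ``coloured this phase'' into ``still active'', adding $1$ to $d'(v)$ and $p_u/2$ to $q'$; this multiplies the bound integrand by at least $\alpha^{112}$. If $c_u = 0$, the external beep forces $u$ to halve its parameter instead of (possibly) doubling it, so $q'$ changes from $q'_0 + \min\{2p_u,1/2\}$ to $q'_0 + p_u/2$ and the integrand is multiplied by $\alpha^{3[l(q'_0+p_u/2)-l(q'_0+\min\{2p_u,1/2\})]} \leq 1$.

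I would then integrate over $c_u$ (still conditioning on everything else), with weights $p_u$ and $1-p_u$ on the two sub-cases above. The desired ``change is non-negative'' inequality then reduces, after cancelling the common factor $B_0/A$ for the bound contribution of everything other than $u$, to
\[
\alpha^{112}\bigl(A_1 - (1-p_u) A_2\bigr) \geq p_u A,
\]
with $A = \alpha^{3 l(q'_0)}$, $A_1 = \alpha^{3 l(q'_0 + p_u/2)}$ and $A_2 = \alpha^{3 l(q'_0 + \min\{2p_u,1/2\})}$. Since $\alpha = 2^{1/36}$ gives $\alpha^{112} = 2^{112/36} \approx 8.63$, while $A_2/A_1$ and $A_1/A$ are logarithmic $q'$-perturbations of modest size, the inequality holds with generous slack.

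The main obstacle is bookkeeping rather than any deep estimate: flipping $B_u$ simultaneously affects $d'(v)$, $q'$ and (through the non-linear function $l(q)=\log_2(5\max\{q,1/5\})$) the entire $q'$-factor of the bound; one must also respect the capping of $p$ at $1/2$. These technicalities force a brief case split according to whether $q'_0 \geq 1/5$ and whether $2p_u \leq 1/2$, but in every regime the $\alpha^{112}$ gain from preserving a neighbour absorbs the small loss coming from the smaller $p'_u$ in the non-candidate case, so the coordinate-wise flip is monotone and the iterated argument identifies the all-neighbours-inhibited configuration as the maximiser.
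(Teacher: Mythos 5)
Your architecture is exactly the one the paper uses: flip one neighbour $u$ at a time, dispose of the case where some vertex of $\overline{N}(v)$ already beeps at $u$, then compare the weighted sub-cases $c_u=1$ (gain) and $c_u=0$ (loss). But there are two genuine gaps in the execution. First, you express all the relevant $q'$-values over a single base $q'_0$. That is not legitimate: when $c_u=1$ the beep of $u$ itself reaches the common neighbours of $u$ and $v$, halving their parameters where they might otherwise have doubled, so the ``everyone else'' contribution to $q'$ in the $c_u=1$ branch can be as small as one quarter of its value in the $c_u=0$ branch. The paper keeps four separate quantities $q_0,q_1,q_2,q_3$ and uses only the inequalities $q_1\ge q_0/4$, $q_2\ge q_0/4$, $q_2\ge q_0-3p_u/2$, $q_3<q_0$; the price is a factor $\alpha^{-6}=2^{-1/6}\approx 0.89$ on the gain term, which must be carried through and which your common-$q'_0$ bookkeeping silently suppresses (in the direction that makes your inequality look easier than it is).

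Second, and more seriously, your reason for the final inequality is wrong. Rewrite $\alpha^{112}\bigl(A_1-(1-p_u)A_2\bigr)\ge p_u A$ as $p_u\bigl(A_1-\alpha^{-112}A\bigr)\ge(1-p_u)\bigl(A_2-A_1\bigr)$: the factor $\alpha^{112}\approx 8.63$ only discounts the single term $\alpha^{-112}A$ coming from ``$u$ takes the colour and leaves''; it does not amplify the gain. Both sides are $\Theta(p_u)$. The left-hand side is about $(1-\alpha^{-112})\,\alpha^{-6}A_1\,p_u\approx 0.78\,p_u A_1$ once the factor from the previous paragraph is paid, while the right-hand side is $(1-p_u)(A_2/A_1-1)A_1$ with $A_2/A_1-1=\alpha^{3(l(q_0)-l(q_2))}-1\approx \tfrac{p_u}{8\max\{q_2,1/5\}}\le \tfrac{5}{8}p_u$ to first order. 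So the inequality closes by roughly $0.78$ against $0.63(1-p_u)$, not by a factor of $8$; it depends on the bound $q_0-q_2\le 3p_u/2$, on the truncation of $l$ at $1/5$, and on the specific constants $112$, $3$, $2$ in the exponent, and establishing it is exactly the four-subcase analysis (on $q_0\lessgtr 1/5,4/5$ and $p_u\lessgtr 1/8$) that constitutes the bulk of the paper's proof. None of that is carried out in your sketch, and the step you declare to hold ``with generous slack'' is in fact the delicate heart of the lemma.
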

Proof\\

Consider any external behaviour $E$ in which some $u$ is not
inhibited; we will show that the bound is increased or unchanged if
the behaviour is changed to $E^\prime$ in which $u$ is inhibited and
there is no change for any other neighbours of $v$.  (In a given graph
there may be no such $E^\prime$ but we consider the maximum possible
over any graph containing the neighbourhood $\overline{N}(v)$.)  We
consider fixed beeping decisions of all vertices in $\overline{N}(v)$
except $u$ and show that with these decisions $E^\prime$ gives a value
of the bound greater than or equal to that of $E$.  We consider two
cases:

\begin{itemize}
\item Some neighbour of $u$ in $\overline{N}(v)$ beeps:\\ $p_u$ will
  be halved whether or not $u$ is inhibited by $E^{\prime}$ and so
  $p^\prime$, $q^\prime$, $d^\prime(v)$ and the probability of survival are
  the same for $E$ and $E^\prime$. The bound is identical in the two
  cases.
\item Otherwise: 

Let the value of $p^\prime$ be $p_0$ if $u$ does not beep and $p_1$ if
$u$ does beep. $p_1 \le p_0$.

Let the value of $q^\prime$ be $q_0$ if $u$ does not beep and is not
inhibited, $q_1$ if it beeps and is inhibited and $q_2$ if it does not
beep and is inhibited.
Note that if $u$ beeps and is not inhibited,
$u$ takes the current colour; we note the value of $q^\prime$ in this case
as $q_3$ and note that $q_3<q_0$ since the effect of $u$ beeping is to
remove $p_u$ from the sum for $q$ and possibly to halve the values of $p$ for some
common neighbours of $u$ and $v$.
We have
$q_1\ge q_0/4$ since, at most, $u$'s beeping can result in a vertex
$w$ halving $q_w$ when otherwise it would have doubled it. Similarly
$q_2\ge q_0/4$ and $q_2 \ge q_0-3p_u/2$ since the inhibition results in
$p_u$ being halved rather than potentially doubled.

Let $d_0$ be the new value of $d(v)$ if $u$ does not take the current colour;
if it does, then the new value is $d_0-1$.

The bounds are thus $p_u\alpha ^ {112d_0+ 3 l(q_1)- 2 \log_2(p_1)-t+1}+
(1-p_u)\alpha ^ {112d_0+ 3 l(q_2)- 2 \log_2(p_0)-t+1}$ in the inhibited case
and
$(1-p_u)\alpha ^ {112d_0+ 3 l(q_0)- 2 \log_2(p_0)-t+1}+p_u\alpha^{112(d_0-1)+3l(q_3)-2\log_2(p_1)}$
in the
uninhibited case.  We claim that the ratio of the inhibited bound to
the uninhibited is at least $1$.  This ratio $\ge \frac{p_u\alpha^{ 3
    l(q_1)}+(1-p_u)\alpha^{ 3 l(q_2)}} {(1-p_u)\alpha^{ 3 l(q_0)}+p_u\alpha^{3l(q_0)}/8}$
(since $p_1\le p_0$, $p_1\ge p_0/4$, $q_3<q_0$) and $\alpha^{108}=8$\\ 
Remember that $p_u$ is a power of $1/2$.  We
consider four subcases:
\begin{itemize}
\item $q_0 \le 1/5$: $l(q_1)=l(q_2)=l(q_0)=0$ and the ratio
$\ge (p_u+1-p_u)/(1-p_u+p_u/8) > 1$.

\item $1/5 < q_0$ and $p_u\ge 1/8$: We use the bounds $q_1 \ge q_0/4$ and
$q_2 \ge q_0/4$ giving that the
ratio is at least
$(p_u+1-p_u)\alpha^{-6}/(1-p_u+p_u/8)$
$= \alpha^{-6}/(1-p_u+p_u/8) \ge \alpha^{-6}/(7/8+1/64) \ge 1$.

\item $1/5 <q_0 \le 4/5$ and $p_u \le 1/16$: We use the bounds $q_1
  \ge q_0/4$ and $q_2 \ge q_0 -3p_u/2$ and the fact that for $0<x \le
  15/32$, $(1-x)^{1/12} > 1-4/3(x/12)$ so that the ratio is at least
  $(p_u\alpha^{-6}/(1-p_u) + (1-3p_u/2q_0)^{3\log_2\alpha})\frac{1-p_u}{1-p_u(1-1/8)}$
  $\ge (p_u\alpha^{-6} + (1-15p_u/2)^{1/12})\frac{1-1/16}{1-(1-1/8)/16}$ 
  $\ge (p_u\alpha^{-6} + (1-(15p_u/2)/12\times (4/3)))\frac{120}{121}$ 
  $\ge (1 + p_u(\alpha^{-6} - 5/6) )\frac{120}{121}> 1$.

\item $q_0 > 4/5$ and $p_u\le 1/16$: Using the same bounds as in the
  previous subcase the ratio is greater than
  $(\frac{p_u}{1-p_u}\alpha^{ -6} + \alpha^{ 3 (l(q_0-3p_u/2)-l(q_0))})\frac{120}{121}$
  $>(\frac{p_u}{1-p_u}\alpha^{-6 } + \alpha^{ 3
    (l(4/5-3p_u/2)-l(4/5))})\frac{120}{121}$ and this is the bound already used for
  the case with $q_0=4/5$ and the same value of $p_u$ and so is
  greater than or equal to $1$.
\end{itemize}
\end{itemize}
This ends the proof that $E^\prime$ gives a value for the bound at
least as great as that for $E$. The lemma is then proved by a simple
induction on the number of uninhibited vertices.

We return to the inductive proof. Using the lemma we will always take
$q^{\prime}=q/2$ giving probability of survival
$\le \alpha^{ 112d^\prime(v)+3  l(q/2) - 2  \log_2 p^{\prime}-t+1}$
$\le \alpha^{ 112d(v)+3  l(q/2) - 2  \log_2 p^{\prime}-t+1}$.

We consider five cases.
\begin{itemize}
\item $q \ge 2/5$: We have $l(q/2)= l(q)-1$ and $p^\prime \ge p/2$ giving
$$
{\mathbb P}r(survival) \le \alpha^{112d(v)+ 3  (l(q)-1)- 2 (\log_2 p-1)-t+1}
=\alpha^{112d(v)+ 3  l(q)- 2 (\log_2 p)-t}
$$
as claimed.

\item $1/5 \le q < 2/5$ and $p<1/2$: The probability that a neighbour
  of $v$ beeps is less than $q$ so that $p_v$ is doubled with
  probability at least $1-q$ and halved in the remaining cases. In all
  cases $l(q/2)=0$.  Hence $P(survival) \le
  \alpha^{112d(v)-2\log_2(p)-t+1}((1-q)\alpha^{- 2 } +q\alpha^{ 2 })$ and our
  claim is that it is at most
  $\alpha^{112d(v)+3\log_2(5q)-2\log_2(p)-t}$. That is the claim is valid
  since $(1-q)\alpha^{- 1 } +q\alpha^{ 3 }\le\alpha^{3\log_2(5q)}$ in
  the range $1/5 \le q < 2/5$.  (It is valid at $q=1/5$ since
  $4\alpha^{-1}+\alpha^3<5$ and at $q=2/5$ since
  $3\alpha^{-1}+2\alpha^3<5\alpha^3$; between these two limits, the
  left hand side is linear and the right hand side
  ($(5q)^{3\log_2\alpha}$) has a negative second derivative so the
  inequality holds there also.)

\item $1/5 \le q < 2/5$ and $p=1/2$: With probability greater than
  $1-q$ no neighbour of $v$ beeps and then $v$ has probability $1/2$
  of taking the current colour; otherwise $p_v$ remains $1/2$. On
  the other hand, if a neighbour does beep, $p_v$ becomes $1/4$.  In
  all cases $l(q/2)=0$.  Thus the probability of survival $\le
  \alpha^{112d(v)+2-t+1}((1-q)/2+q\alpha^2)$ and the claim is that it is at
  most $\alpha^{112d(v)+3\log_2(5q)+2-t}$. That is the claim is valid if
  $(1-q)\alpha/2+ q\alpha^{ 3 }\le\alpha^{3\log_2(5q)}$ a weaker
  condition than in the previous case.

\item $q < 1/5$ and $p<1/2$: The probability that a neighbour of $v$
  beeps is less than $1/5$ so that $p_v$ is doubled with probability
  at least $4/5$ and halved in the remaining cases. In all cases
  $l(q)$ decreases or is unchanged.  Hence ${\mathbb P}r(survival) \le \alpha^{112d(v)+ 3
    l(q) - 2 \log_2(p)-t+1} ((4/5)\alpha^{- 2 } +(1/5)\alpha^{ 2 })$
  and this is less than $\alpha^{112d(v)+ 3 l(q) - 2 \log_2 p-t}$ as claimed,
  again since $4\alpha^{- 1 }+ \alpha^{ 3 }<5 $.

\item $q < 1/5$ and $p=1/2$: With probability greater than $4/5$ no
  neighbour of $v$ beeps and then $v$ has probability $1/2$ of
  taking the current colour; otherwise $p_v$ remains $1/2$. On the
  other hand, if a neighbour does beep, $q$ decreases and $p_v$
  becomes $1/4$.  Hence ${\mathbb P}r(survival) \le (2\alpha^{112d(v)+ 3 l(q/2)- 2
    \log_2(1/2)-t+1}+ \alpha^{ 3 l(q/2) - 2 \log_2(1/4)-t+1})/5$ $\le
  \alpha^{ 3 l(q)- 2 \log_2(1/2)-t+1}(2+\alpha^{ 2 })/5$ which is at
  most $\alpha^{112d(v)+ 3 l(q) - 2 \log_2(1/2)-t}$ as claimed since
  $2+\alpha^{ 2 }<5\alpha^{-1} $.
\end{itemize}

This completes the proof of the theorem.

The complexity of Algorithm \ref{Scott} is described by:
\begin{theorem}\label{coloriagesans}
The number of phases (slots) taken by the colouring algorithm on any graph with
$n$ nodes and maximum degree $\Delta$ is at most $76\log_2 n+112\Delta$ w.h.p.
\end{theorem}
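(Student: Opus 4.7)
The plan is to apply the preceding theorem once, at the initial configuration, to every vertex, and then take a union bound over the $n$ vertices of $G$. Initially every vertex is active, so the residual graph is $G$ itself and $p_u = 1/2$ for every $u$. For any vertex $v$ with $d(v)\ge 1$ this gives
\[
q_v \;=\; \sum_{u\in N(v)} p_u \;=\; d(v)/2 \;\ge\; 1/2 \;>\; 1/5,
\]
so $q_v^* = q_v \le \Delta/2$ and consequently
\[
t_0 \;=\; 3\log_2(5 q_v^*) \;-\; 2\log_2 p_v \;\le\; 3\log_2(5\Delta/2) + 2 \;\le\; 3\log_2 n + 6.
\]
The isolated case $d(v)=0$ only makes $t_0$ smaller (it collapses to the constant $2$) and produces a strictly tighter bound below.

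Setting $t = 76\log_2 n + 112\Delta$, the preceding theorem yields that the probability of any fixed vertex $v$ still being active after $t$ phases is at most
\[
\alpha^{112\, d(v) + t_0 - t} \;\le\; \alpha^{112\Delta \,+\, 3\log_2 n + 6 \,-\, 76\log_2 n \,-\, 112\Delta} \;=\; \alpha^{-73\log_2 n + 6}.
\]
Since $\alpha = 2^{1/36}$, we have $\alpha^{-36\log_2 n} = n^{-1}$, so the right hand side equals $2^{1/6}\, n^{-73/36}$, which is $o(n^{-2})$. A union bound over the $n$ vertices then shows that the probability that any vertex is still active after $76\log_2 n + 112\Delta$ slots is $o(1/n)$, i.e.\ every vertex has received a colour within $76\log_2 n + 112\Delta$ slots w.h.p.

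The only calibration to check, and arguably the main obstacle, is that the coefficient $76$ of $\log_2 n$ is large enough to absorb two contributions simultaneously: the $72\log_2 n$ needed so that $\alpha^{-72\log_2 n} = n^{-2}$ lies comfortably below the union-bound threshold $1/n$, and the $3\log_2 n$ coming from the initial value of $t_0$; the remaining slack of order one absorbs the additive constants and yields the $o(\cdot)$ factor. The coefficient $112\Delta$ in $t$ cancels exactly against the $112\, d(v) \le 112\Delta$ produced by the preceding theorem, so no interaction between $\Delta$ and $\log n$ arises and no further bookkeeping is needed.
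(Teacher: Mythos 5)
Your proposal is correct and follows essentially the same route as the paper: instantiate the preceding theorem at the initial configuration (where $p_v=1/2$ and $q_v\le\Delta/2<n/2$ gives $t_0\le 3\log_2 n+6$), plug in $t=76\log_2 n+112\Delta$, and take a union bound over the $n$ vertices. Your bookkeeping is in fact slightly more careful than the paper's (which quietly requires $t\ge 112\Delta+76\log_2 n+6$), since you carry the additive constant into the exponent as the factor $2^{1/6}$ and absorb it with the slack $n^{-1/36}$.
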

\begin{proof}
Since initially $p_v=1/2$ and $q_v <n/2$ where the graph has $n$
vertices, we conclude that $t_0 < 3\log_2(5n/2)-2\log_2(1/2)<3\log_2
n+6$
so that after $t\ge 112\Delta +
76\log_2 n+6$ phases, any vertex has probability 
 $\alpha^{3\log_2n+6-(76\log_2n+6)}=n^{-73/36}$ of
survival and the probability that any vertex survives is
 at most $n^{-37/36}=o(n^{-1})$.
\end{proof}
\begin{remark}
The number of colours used by the colouring algorithm is at most
 $76\log_2 n+112\Delta$ w.h.p.
\end{remark}

\subsection{A Las Vegas Colouring Algorithm with the
 Knowledge of an Upper Bound of the Maximum 
Degree in $B_{cd}L$}\label{colwku}
This section presents and analyses a Las Vegas colouring algorithm
in the model $B_{cd}L$,
assuming that the vertices are aware of an upper
bound $K$ on the maximum
degree $\Delta$ of the graph. So we aim to compute a $K+1$
colouring.

Each vertex has a counter (initially, its value is $0$) and a set
of colours: $\{0,\cdots,K\}$.
  Each
phase corresponds to three slots. In the first 
slot an uncoloured vertex tries to get a colour by beeping
with a certain probability if the counter belongs to the set of colours.
When a vertex beeps in the second
slot, this means that it succeeds in choosing a colour (the current value
of the counter), so there is no
need to detect collision in this slot.
Vertices which  hear a beep at slot 2 withdraw the corresponding
colour. 
\begin{algorithm}[h]
\label{algorithm::coulouringBcdL}
\caption{A colouring algorithm with the knowledge of an upper bound of the 
maximum degree in $B_{cd}L$.}
\textbf{Var:}\\
\hspace*{1cm}$K:$ \textbf{Global integer constant upper bound on the maximum
degree of $G$;}\\
\hspace*{1cm}$state$ $\in \{Active, Inactive\}$ \textbf{Init} $Active;$ \\
\hspace*{1cm}$Colours$ $=\{0,\cdots,K\}$; \\
\hspace*{1cm}$Colour$ $\in \{0,\cdots,K\}$ \textbf{Init } $0;$ \\
\hspace*{1cm}$counter$ $\in \{0,\cdots,K\}$ \textbf{Init } $0;$\\
\hspace*{1cm}$slot:$ $Integer$;\\
\Repeat{$state=Inactive$}{
  \Switch{slot}{
 \uCase{1}{
  \lIf{$counter\in Colours$}{beep with probability 
       $\frac 1{2\times \mid Colours\mid}$}
}
 \uCase{2}{
  \If{beeped and no internal collision detection}{ $Colour:=counter;$
    $state:=Inactive;$
    beep;\\ }
  \If{beep heard at slot 2}
    {$Colours:=Colours\setminus\{
    counter\}$} $counter:=(counter + 1) \mod K$ }
}
}
\end{algorithm}

\begin{remark}
We can consider  the {\it modified}
colouring algorithm defined in the following way.
By a {\it cycle} we mean $K$ rounds considering the $K$ colours.
Now, every vertex  uses the value of $|Colours|$ at the start
of each  cycle to decide the beeping probability it uses  throughout
this cycle.
\end{remark}

\subsubsection{Analysis of the Algorithm.}
We have the following theorem:
\begin{theorem} 
Let $G$ be a graph  of size $n$, let
$K$ be an upper bound on the maximum degree of $G$. The
Colouring algorithm computes a $K+1$ colouring of $G$ in at most 
$O\left( K(\log n +\log^2K)\right)$
w.h.p.
\end{theorem}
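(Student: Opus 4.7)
The plan is to show that each vertex becomes Inactive within $O(\log n + \log^2 K)$ cycles w.h.p., where one cycle consists of $K+1$ phases covering all colours and therefore takes $\Theta(K)$ slots. Multiplying gives the stated slot complexity; a union bound over the $n$ vertices converts ``each vertex w.h.p.'' into ``all vertices w.h.p.'' In analysing a single cycle I would work with the \emph{modified} variant described in the remark, so that the beeping probability is fixed throughout the cycle and the analysis factors nicely over the $K+1$ slots of a cycle.

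The first step is the key invariant: for every active vertex $v$, $c_v := |Colours_v| \ge a_v + 1$, where $a_v$ is the number of \emph{active} neighbours of $v$. This follows by induction: initially $c_v = K+1 \ge d(v) + 1 \ge a_v + 1$, and the palette $Colours_v$ shrinks by exactly one colour precisely when a neighbour successfully takes a colour (hence becomes Inactive), so $c_v - a_v$ is non-decreasing. This ``slack'' is essential because it allows the following per-cycle bound.

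The second step is a per-cycle success lemma. In one cycle $v$ beeps in each of its $c_v$ candidate slots with probability $1/(2c_v)$ (independently across slots), so $v$ beeps at least once with probability at least $1 - (1 - 1/(2c_v))^{c_v} \ge 1 - e^{-1/2}$. Conditioned on $v$ beeping at slot $j$, the probability that no active neighbour $u$ (with $j \in Colours_u$) beeps at slot $j$ is $\prod_u (1 - 1/(2c_u))$; using $c_u \ge a_u + 1$ and summing $\sum_{u \in N(v), \text{ active}} 1/(2c_u)$, together with the invariant, this product can be bounded below by a constant. Combining, the probability that $v$ colours itself in a given cycle is at least some absolute $\delta > 0$, \emph{provided} the local structure is in a ``steady-state'' regime in which the residual degree of $v$ has dropped to $O(c_v)$.

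The hard part will be handling the initial transient, which is where I expect the extra $\log^2 K$ term to appear. At the very start $c_v = K+1$ for all $v$, the beeping probability is $O(1/K)$, and $v$ may have up to $K$ active neighbours with the same palette, so interference between $v$ and its neighbours is maximal and the per-cycle success probability is only $\Omega(1/\log K)$-ish rather than a constant. I would handle this by a geometric shrinkage argument: show that within $O(\log K)$ cycles either $v$ is coloured or its active degree $a_v$ halves, each ``halving stage'' costing $O(\log K)$ cycles to go through with high probability; iterating $O(\log K)$ times reduces $a_v$ to $O(1)$ in $O(\log^2 K)$ cycles, after which the constant per-cycle probability kicks in and a further $O(\log n)$ cycles suffice. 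The independence across cycles (because random bits in different cycles are fresh) justifies a Chernoff/Markov bound on the number of successful cycles, and a final union bound over the $n$ vertices converts the per-vertex failure probability $o(1/n)$ into a global w.h.p. statement, completing the proof.
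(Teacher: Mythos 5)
Your opening invariant ($|Colours_v|\ge a_v+1$, where $a_v$ is the number of active neighbours) is exactly the paper's starting point, but the architecture of your proof rests on a misdiagnosis of where the $\log^2K$ term comes from, and this creates a genuine gap. There is no ``initial transient'': at the start every palette has size $K+1$ and every vertex beeps with probability $1/(2(K+1))$ per slot, so $v$ succeeds at a given slot with probability about $e^{-1/2}/(2(K+1))$ and succeeds somewhere in the cycle with \emph{constant} probability, not $\Omega(1/\log K)$. More generally, your own invariant guarantees that the ``steady-state regime'' you hedge on (residual degree $O(c_v)$) holds at \emph{all} times, so the halving-stage argument you build to spend $O(\log^2K)$ cycles is solving a problem that does not exist --- and as sketched it is also circular, since ``either $v$ is coloured or $a_v$ halves within $O(\log K)$ cycles'' presupposes that $v$'s neighbours get coloured at a good rate, which is essentially the statement being proved. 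The actual source of the $\log^2K$ term is the one feature you explicitly discard by switching to the modified variant: in Algorithm 3 as written, a neighbour $u$'s beeping probability jumps from $1/(2l)$ to $1/(2(l-1))$ \emph{in the middle of a cycle} each time its palette shrinks, and the extra interference accumulated this way over the whole execution is what the paper bounds by $\log^2K/4$ (for the modified variant this contribution is zero and $O(\log n)$ cycles suffice, as the paper itself notes). So even if completed, your argument analyses a different algorithm from the one the theorem is about and never accounts for the $\log^2K$.

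There is also a gap inside the per-cycle lemma itself. At a single colour $c$ the interference $\sum_{u:\,c\in Colours_u}1/(2C_u)$ need not be $O(1)$: if many active neighbours of $v$ have small palettes all containing $c$, this sum can be of order $a_v$, and the product you want to bound below by a constant is then exponentially small at that colour. What rescues the claim is an averaging step you do not make: summing over all colours gives $\sum_c\sum_u [c\in Colours_u]/(2C_u)=a_v/2<C_v/2$, so at least half of $v$'s $C_v$ colours carry $O(1)$ interference, and the constant per-cycle success probability must be extracted from those colours. The paper avoids the per-cycle constant altogether: it multiplies per-round failure bounds of the form $(1-1/(2C_i))\bigl(1+\sum_u p_u(i,c)/(2C_i-1)\bigr)$ along a worst-case path of the execution tree, obtains $P_k\le\exp\bigl(-k/2+k/6+\log^2K/4\bigr)$, and reads off $k=O(\log n+\log^2K)$ for both variants at once. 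If you want to keep your per-cycle structure, you need (i) the averaging-over-colours step above and (ii) a separate accounting of the mid-cycle probability increases of the basic algorithm; your transient/halving machinery should be dropped.
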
 
\begin{proof}
We consider  the Colouring algorithm in which every vertex has the same
upper bound $K$ on the maximum degree. 
We consider both the {\it basic} algorithm
in which $v$ uses the current value of $|Colours|$ to decide its beeping
probability and also the {\it modified} algorithm in which it uses the value
at the start of the current cycle.
We recall that by a {\it cycle} 
we mean $K$ rounds considering the $|Colours|$ colours.

We consider $P_k$ the probability that
vertex $v$ survives uncoloured over $k$ cycles.

In what follows
\begin{itemize}
\item $i$ ranges over $1..k$,\\
\item $c$ ranges over the $C_i$ colours possible for $v$ at the start of cycle $i$,\\
\item $u$ ranges over the neighbours of $v$ still uncoloured at the start of cycle $i$,\\
\item $p_u(i,c)$ is the probability that $u$ beeps at colour $c$ in cycle $i$.
\end{itemize}
First we consider the probability $p$ that $v$ survives uncoloured in a single round
using a colour $c \in colours(v)$ .
\begin{eqnarray}
p & = & {\mathbb P}r\left(v{\rm~does~not~beep~at~colour~}c{\rm~in~cycle~}i\right)\nonumber\\
   & + & {\mathbb P}r\left(v{\rm~does~beep~and~some~neighbour~}u{\rm~also~beeps}\right)\nonumber
\end{eqnarray}   
but ${\mathbb P}r\left(v{\rm~does~beep}\right) \ge 1/2C_i$
and the beeping probabilities of $v$ and its neighbours
are independent giving
\begin{eqnarray}
p & \le  & \left(1-1/2C_i\right) + {\mathbb P}r\left({\rm some~neighbour~beeps}\right)/2C_i\nonumber\\
   & =    & \left(1-1/2C_i\right)\left(1+{\mathbb P}r\left({\rm some~neighbour~beeps}\right)/(2C_i-1)\right)\nonumber\\
   & \le  & \left(1-1/2C_i\right)\left(1+\sum_u p_u(i,c)/(2C_i-1)\right).\nonumber
\end{eqnarray}
After the first round, $p_u(i,c)$ and $C_i$ are random variables
dependent on what has happened so far,
and we consider the tree of all possible executions up to $k$ cycles,
where each tree node has its own value of $p$.
It is easily shown by induction that
$P_k$ is upper bounded by the maximum over all paths in this tree
of the product of the values of $p$ along the path.
We fix a path which gives this maximum and
bound the product for this path.
We have the probability of surviving cycle $i$
 $\le  (exp(-1/2) * \prod_c  (1+\sum_u p_u(i,c)/(2C_i-1)))$
 $\le  exp(-1/2 +  \sum_c \sum_u p_u(i,c)/(2C_i-1))$
and so
$P_k \le  exp(-k/2 + \sum_i \sum_c \sum_u p_u(i,c)/(2C_i-1))$.

We will give an upper bound on $\sum_i \sum_c \sum_u p_u(i,c)/(2C_i-1)$.

We number $v$'s neighbours  in the initial graph from $1$ to $deg(v)$ in decreasing order
of their {\it lifetime}, that is the number of rounds in which they remain uncoloured:\\
Thus as long as $u_j$ is not coloured the degree of $v$ in the residual graph
is at least $j$ and so $|colours(v)|>j$. 

We write $p_u(i,c)$ as $base + \delta$ where $base = 1/2C_i$
and $\delta$ is what has been added as a result of $colours(u)$ being decreased
before colour $c$ and we will bound $\sum_i \sum_u \sum_c base/(2C_i-1)$ and
$\sum_u \sum_i \sum_c \delta/(2C_i-1)$ separately.

Firstly $base$: in cycle $i$, $v$ has $C_i$ colours available and so has less than $C_i$ neighbours;
each neighbour $u$ has $\sum_c base \le 1/2$, giving, for this cycle,
$\sum_u \sum_c base/(2C_i-1)\le 1/6$ so that $\sum_i \sum_u \sum_c base/(2C_i-1) \le k/6$.

Secondly $\delta$:
For the modified algorithm $\delta=0$. In the basic algorithm,
a vertex $u_j$ initially has $K$ colours
available and when (if) this number decreases from $l$ to $l-1$, $p_u(i,c)$ increases
from $1/2l$ to $1/2(l-1)$ and this increase of $1/2l(l-1)$ affects $\delta$ only
for the, at most, $l-1$ colours still to be considered in this cycle so that
$ \sum_c \delta$ for a cycle is at most $\sum_l 1/2l$, the sum being taken over those $l$
for which the number of colours is reduced from $l$.
This gives an upper bound 
on $ \sum_i \sum_c \delta/(2C_i-1)$ of $\log K/2(2j+1)$ since $C_i>j$
and so $\sum_u \sum_i \sum_c \delta/(2C_i-1) < \sum_j \log K/2(2j+1) < \log^2 K/4$.

Hence, by standard arguments, after $k=O(\log n + \log^2 K)$ cycles for the basic
algorithm or $O(\log n)$ cycles for the modified algorithm, $v$ has probability $o(1/n^2)$ of
remaining uncoloured and the graph has probability $o(1/n)$ of having any uncoloured
vertex.

\end{proof}

\section{A Las Vegas  Algorithm for $2$-hop-colouring in 
 $B_{cd}L_{cd}$ without any Knowledge}\label{2col}

To calculate a $2$-hop-colouring of a graph $G$, we need to calculate a 
colouring of
the ``square'' of $G$, that is the graph with the same vertices as $G$
and an edge between any pair $v$ and $w$ of vertices which either are
neighbours in $G$ or have a common neighbour in $G$.  
Algorithm \ref{Scott} (Section \ref{colouring})
is  modified to perform the computation of the
colouring in the square of $G$, i.e., the $2$-hop-colouring of $G$
in $B_{cd}L_{cd}$.

At slot 1, an active vertex beeps with a certain probability. At slot 2,
a vertex having two beeping neighbours beeps. Thus, a candidate vertex,
which beeps without internal collision and which has no neighbours
having detected a peripheral collision, has beeped alone among vertices
at distance at most $2$ and it becomes coloured. 
At slot 3, an 
active  vertex having heard at least one beep beeps.
Finaly after slot 3,
 an active 
vertex  knows whether at least one vertex beeped at distance at most $2$ 
to possibly change its probability 
(as in Algorithm \ref{Scott})
to be candidate.

\begin{algorithm}
\caption{A Las Vegas $2$-hop-colouring algorithm in $B_{cd}L_{cd}$ without any 
knowledge.}
\textbf{Var:}\\
\hspace*{1cm}$state\in\{active,coloured,turned$-$off\}$ \textbf{Init } 
$active$; \\
\hspace*{1cm}$p:$ $real$ \textbf{Init } $1/2$;\\
\hspace*{1cm}$slot:$ $Integer$;\\
\hspace*{1cm}$colour:$ $Integer$ \textbf{Init } $0$ ;\\
\Repeat{$state=turned$-$off$}{
  \Switch{slot}{
    \uCase{1}{
      \If{$state=active$}{
         colour:=colour+1;\\
         $candidate:=true$ with probability $p$ else $false;$\\
        \If{$candidate$}{beep}
      }
    }
    \uCase{2}{
      \If{peripheral collision at slot 1}{beep}
      \If{$candidate$ and ($not$ internal collision at slot 1) 
and (no beep heard at slot 2)}
           {$state:=coloured$}
     }
    \uCase{3}{ 
       \If{beep heard at slot 1}{beep} 
          \If{$state=active$}
            {\If {(not $candidate$) and (no beep heard at slot 1 and at slot 3)}
             {\If{$(p<1/2)$}{$p:=2\times p$}}
             \Else {p:=p/2}}
           }

    \uCase{4}{ 
       \If{$state=active$}{beep} 
      \If{no beep heard at slot 4 and $state=coloured$}{$state:=turned$-$off$}
             }
}
}

\end{algorithm}

In this context, Theorem \ref{coloriagesans} becomes:
\begin{theorem}\label{2hopcolour}
The number of phases taken by the $2$-hop-colouring algorithm on any graph with
$n$ nodes and maximum degree $\Delta$ is at most 
$76\log_2 n+112\Delta^2$ w.h.p. (the number of slots is at most 
$4\times(76\log_2 n+112\Delta^2)$).
\end{theorem}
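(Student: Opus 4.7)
The plan is to reduce the analysis to Theorem \ref{coloriagesans} by viewing Algorithm~4 as a simulation of Algorithm~\ref{Scott} applied to the square graph $G^2$, where $G^2$ has the same vertex set as $G$ and an edge between every pair of vertices at distance at most $2$ in $G$. Since a $2$-hop-colouring of $G$ is exactly a colouring of $G^2$, and $|V(G^2)| = n$ with $\Delta(G^2) \le \Delta(G) + \Delta(G)(\Delta(G)-1) \le \Delta^2$, obtaining the claimed bound reduces to checking that Algorithm~4 faithfully emulates Algorithm~\ref{Scott} on $G^2$ in the model $B_{cd}L_{cd}$.

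To verify the simulation, I would go through the four slots and match each action with the corresponding step in Algorithm~\ref{Scott} applied to $G^2$. At slot~1 each active vertex becomes a candidate with probability $p$ and beeps; the $B_{cd}$ capability detects an internal collision exactly when some candidate lies in $N(v)$. At slot~2, any vertex that perceived a peripheral collision at slot~1 beeps, so a candidate $v$ hears a beep at slot~2 iff some neighbour of $v$ had two candidate neighbours, which happens precisely when another candidate is at distance~$2$ from $v$ in $G$ via some common neighbour. Combining with the internal collision at slot~1, $v$ becomes coloured iff it is the unique candidate in $N_2(v)=N_{G^2}(v)\cup\{v\}$, matching the update rule of Algorithm~\ref{Scott} on $G^2$. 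At slot~3 each vertex that heard a beep at slot~1 rebeeps, so an active $v$ hears a beep at slot~1 or slot~3 iff some candidate lies in its $G^2$-neighbourhood; the doubling/halving of $p$ then agrees exactly with the adjustment in Algorithm~\ref{Scott} on $G^2$. Slot~4 is used only for local termination detection and does not influence which vertices are coloured.

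Once the emulation is established, Theorem~\ref{coloriagesans} applied to $G^2$ gives that the number of phases until every vertex is coloured is at most $76\log_2 n + 112\,\Delta(G^2) \le 76\log_2 n + 112\,\Delta^2$ with high probability. Multiplying by the $4$ slots per phase yields the slot bound $4(76\log_2 n + 112\Delta^2)$ claimed in the theorem.

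The main technical point — and the one I expect to require the most care — is the slot~2 argument: showing that the combination of internal collision at slot~1 and the re-beeping triggered by peripheral collision at slot~2 captures every possible configuration of two distinct candidates within distance~$2$ of $v$, including the case where the two candidates share several common neighbours, the case where one of the candidates is $v$'s neighbour, and the case where $v$ is itself a candidate. Once these configurations are enumerated and each is shown to produce either an internal collision at $v$'s side or an audible beep at slot~2, the rest of the argument is an immediate transcription of Theorem~\ref{coloriagesans}.
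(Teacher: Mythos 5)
Your proposal is correct and matches the paper's (implicit) argument exactly: the paper treats Algorithm~4 as Algorithm~\ref{Scott} run on the square graph $G^2$, whose maximum degree is at most $\Delta+\Delta(\Delta-1)=\Delta^2$, and then invokes Theorem~\ref{coloriagesans}. Your slot-by-slot verification of the emulation (in particular the slot~2 analysis showing that a candidate is coloured iff it is the unique candidate in $N_2(v)$) is precisely the detail the paper leaves to the reader, and your count of $4$ slots per phase gives the stated slot bound.
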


\begin{remark}
The same transformation can be done the algorithm given in section
\ref{colwku} when we know an upper bound of the maximum degree.
\end{remark}

\section{A Las Vegas Degree 
Computation Algorithm in $B_{cd}L_{cd}$}\label{degreecdcd}
The $2$-hop-colouring algorithm may be viewed as a degree computation algorithm.
We present in this section a Las Vegas Degree 
Computation Algorithm in $B_{cd}L_{cd}$, Algorithm \ref{degree},
 inspired by the $2$-hop-colouring algorithm
given in Section \ref{2col}. The idea is very simple:
each vertex tries to be counted by its neighbours by beeping alone
among vertices at distance at most two. When it is the case it informs
its neighbours which increment their degree, 
it no longer tries to be counted and
listens until the end of the algorithm for counting its neighbours.
Slot 5 allows a vertex to detect the termination of the computation
of its degree.
\begin{algorithm}\label{degree}
\caption{A Las Vegas degree computation algorithm in $B_{cd}L_{cd}$ without any 
knowledge.}
\textbf{Var:}\\
\hspace*{1cm}$state\in\{active,passive,turned$-$off\}$ \textbf{Init } 
$active$; \\
\hspace*{1cm}$p:$ $real$ \textbf{Init } $1/2$;\\
\hspace*{1cm}$slot:$ $Integer$;\\
\hspace*{1cm}$deg:$ $Integer$ \textbf{Init } $0$ ;\\
\Repeat{$state=turned$-$off$}{
  \Switch{slot}{
    \uCase{1}{
      \If{$state=active$}{
         $candidate:=true$ with probability $p$ else $false;$\\
        \If{$candidate$}{beep}
      }
    }
    \uCase{2}{
      \If{peripheral collision at slot 1}{beep}
             }
 \uCase{3}{
      \If{beep heard at slot 1}{beep}
             }

    \uCase{4}{
      \If{$candidate$ and ($not$ internal collision at slot 1) 
and (no beep heard at slot 2)}
           {$beep$; $state:=passive;$}
       \If{beep heard at slot 4}
          {$deg:=deg+1$} 
          \If{$state=active$}
            {\If {(not $candidate$) and (no beep heard at slot 1 and at slot 3)}
             {\If{$(p<1/2)$}{$p:=2\times p$}}
             \Else {p:=p/2}}
           }

    \uCase{5}{ 
       \If{$state=active$}{beep} 
      \If{no beep heard at slot 5 and $state=passive$}{$state:=turned$-$off$}
             }
}
}

\end{algorithm}

\begin{remark}
The degree algorithm allows each vertex to know its degree.
\end{remark}

We deduce from Theorem \ref{2hopcolour} that:

\begin{theorem}
The number of phases taken by the degree algorithm on any graph with
$n$ nodes and maximum degree $\Delta$ is at most $76\log_2 n+112\Delta^2$ w.h.p.
(the number of slots is $5\times (76\log_2 n+112\Delta^2)$).
\end{theorem}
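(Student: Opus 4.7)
The plan is to reduce the analysis directly to Theorem~\ref{2hopcolour}, by observing that Algorithm~\ref{degree} is structurally a relabelling of the $2$-hop-colouring algorithm of Section~\ref{2col}, with only the post-success bookkeeping changed. Concretely, at each phase every active vertex becomes a candidate with probability $p$, beeps at slot~1, and the mechanism of slots~2 and~3 (peripheral-collision relay, then beep-heard relay) lets a candidate $v$ conclude it is alone in $N_2(v)$ precisely when it hears no beep at slot~2 and has suffered no internal collision at slot~1. The rule for updating $p$ (halve if any beep was heard at slot~1 or slot~3, double up to $1/2$ otherwise) is identical to that of Algorithm~4. Thus, from the point of view of the ``surviving as active'' process, a vertex $v$ in Algorithm~\ref{degree} behaves exactly as a vertex of the $2$-hop-colouring algorithm does while still uncoloured.

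Consequently, the random variable ``number of phases until $v$ leaves the \emph{active} state'' has the same distribution in both algorithms; the only difference is that here $v$ transitions to the \emph{passive} state (at slot~4) rather than to the \emph{coloured} state, and that slot~4 is additionally used by passive vertices to announce their success so that their still-active neighbours may increment $deg$. This extra announcement does not affect the probabilistic dynamics of the active vertices, since an active $v$ updates $p$ only from what it heard at slots~1 and~3. Hence Theorem~\ref{2hopcolour} gives that, w.h.p., every vertex has become passive within $76\log_2 n + 112\Delta^2$ phases.

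It remains to account for the termination detection at slot~5. After all vertices have become passive, the first subsequent phase has no vertex beeping at slot~5, so each passive vertex correctly sets its state to \emph{turned-off} and halts. This adds at most one phase (absorbed in the asymptotic bound and harmless for the stated constant, since the bound already exceeds~$1$). Since each phase comprises five slots, the slot complexity is at most $5\times(76\log_2 n + 112\Delta^2)$ w.h.p., which is the stated bound.

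The only subtlety I expect to need care is checking that the ``degree-counting'' beep at slot~4 by a newly passive vertex cannot disturb the probability-update rule of an active vertex: as written, $p$ is modified only from information gathered at slots~1 and~3, so the coupling with the $2$-hop-colouring process is exact, and the complexity transfers without loss.
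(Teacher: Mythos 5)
Your proposal is correct and takes essentially the same route as the paper, which simply states that the theorem is ``deduced from Theorem~\ref{2hopcolour}'' without further detail; your coupling argument (identical candidacy, success condition, and $p$-update dynamics, with the slot-4 announcement and slot-5 termination check not affecting the active-survival process) is precisely the justification the paper leaves implicit.
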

\par\newpage
\section{Emulating $B_{cd}$ or $L_{cd}$ in $BL$}\label{emuler}
This section presents randomised emulation procedure
(Algorithm  \ref{algorithm::emulateBcd}
 and Algorithm \ref{algorithm::procemulLcd}) of $B_{cd}$ and $L_{cd}$ in
$BL$.

The first procedure, $EmulateB_{cd}inBL()$, 
emulates a beeping slot in $B_{cd}$
 in $BL$.
The second, $EmulateL_{cd}inBL()$, emulates a listening slot
in $L_{cd}$  in  $BL$. Both  procedures are Monte Carlo procedures and
parametrized with an integer $k>1$ and an output boolean parameter
$collision$ which indicates whether a collision has been detected. 
The parameter $k$ controls the probability of error for the collision
detection.

Let $v$ be a vertex. Let $k$ be a vertex.
 We denote by $s$ the signature  of
the vertex $v$ which is the word formed by $k$ bits generated uniformly at 
random; it is denoted by $s:=gen(k)$.

Before any emulation each vertex generates its signature $s$ which depends
on $k$: $s:=gen(k);$ then it uses the following procedures (Algorithm 6 and
Algorithm 7).

{\small    
\begin{algorithm}[!h]
    \label{algorithm::emulateBcd}
    \caption{A Procedure to emulate a $B_{cd}$  in  the $BL$ model.}  
    \textbf{Procedure }$EmulateB_{cd}inBL$(\textbf{IN:}{$s:$ word of bits associated to the vertex}\textbf{;} \textbf{OUT: }{$collision: boolean$})
    
    $collision := false;$
    
    $ i := 0;$
    
    \Repeat{$i =  k$ }{
           
      \lIf{$s[i]=0$}{ beep in slot 1; listen in slot 2}
      
      \lElse{listen in slot 1; beep in slot 2}
      
      \lIf{a beep was heard}{$collision := true$}
       $i := i+1$
      
    }
    
    \textbf{End Procedure}
\end{algorithm}

\begin{algorithm}[!h]
  \label{algorithm::procemulLcd}
  \caption{A Procedure to emulate $L_{cd}$  in the $BL$ model.}  
  \textbf{Procedure }$EmulateL_{cd}inBL$(\textbf{IN:}{$k$: Global integer constant}\textbf{;} \textbf{OUT: }{$collision: boolean$} )
    
  $collision := false;$
    
  $i := 0;$
    
  \Repeat{$i =  k$}{          
      
    listen in slot1;
    
    listen in slot2;
      
    \lIf{two beeps were heard}{$collision := true$}
      
    $i := i+1$
  }
  
  \textbf{End Procedure}
\end{algorithm}
}
The value of $k$ depends on the bound of the error probability we require,
a straightforward adaptation of the analysis done in Section  \ref{collision}
gives:
\begin{lemma}\label{emule}
For any $\varepsilon>0$, and any  $n>0$:
\begin{enumerate} 
\item if $k=\lceil \log_2\left(\frac n\varepsilon\right) \rceil$, then, the procedures are correct on $G$ with
probability $1-\varepsilon$, 
\item if $k=\lceil \log_2\left(\frac 1 \varepsilon\right) \rceil$, then, for any vertex $v$, the procedures are correct on $v$ 
with probability $1-\varepsilon$, 
\item if $k = \lceil 2\log_2(n)\rceil$,  then, the procedures are correct on $G$ w.h.p. 
\end{enumerate}
\end{lemma}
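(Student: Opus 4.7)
The plan is to reduce the lemma to the collision-detection analysis already carried out in Section \ref{collision}, observing that the two emulation procedures are structurally identical to Algorithm~1: the successive bits of the signature $s=gen(k)$ play exactly the role of the fresh fair coin flips used there. First I would fix a vertex $v$ and bound the probability that a single invocation of $EmulateB_{cd}inBL$ or $EmulateL_{cd}inBL$ on $v$ returns the wrong answer. For the sender-side procedure, $v$ beeps and needs to notice any concurrently beeping neighbour $u$; this detection fails only if $v$ never listens while $u$ beeps, i.e.\ only if $s_u$ agrees with $s_v$ in every one of the $k$ positions. Since the signatures are independent uniform $k$-bit strings, this event has probability exactly $2^{-k}$. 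For the listener-side procedure, $v$ misses a peripheral collision between two beeping neighbours $u_1,u_2$ only when $s_{u_1}=s_{u_2}$, again of probability $2^{-k}$; more than two concurrently beeping neighbours only decrease this probability. In both cases false positives are impossible, so the per-vertex failure probability is at most $2^{-k}$.

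Given this uniform per-vertex bound, the three items follow by elementary choices of $k$ and, when the statement is global, a union bound over the $n$ vertices. Item~2 is immediate: taking $k=\lceil \log_2(1/\varepsilon)\rceil$ yields $2^{-k}\leq \varepsilon$, which is the required per-vertex guarantee. For item~1, a union bound gives a global error of at most $n\cdot 2^{-k}$, so $k=\lceil \log_2(n/\varepsilon)\rceil$ produces a failure probability bounded by $\varepsilon$. Item~3 is the specialisation of item~1 to $\varepsilon=1/n$, giving $k=\lceil 2\log_2 n\rceil$ and paralleling the proof of the corollary of Lemma~\ref{lemma::collision_local}.

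The only point requiring real care, rather than a genuine obstacle, is to justify that the failure event is really captured by signature coincidence in both emulation procedures. The subtlety is that the decisions of which vertices wish to beep in the emulated slot are made independently of the signatures drawn by $gen(k)$, so conditioning on an arbitrary pattern of beeping choices among $v$'s neighbours leaves their signatures independent and uniform; once this independence is made explicit, the per-vertex bound of $2^{-k}$ and hence the three claims follow verbatim from the arguments of Section \ref{collision}.
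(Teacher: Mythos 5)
Your proposal is correct and follows exactly the route the paper intends: the paper gives no separate proof of Lemma 8, merely asserting that it is "a straightforward adaptation of the analysis done in Section 2," and your argument is precisely that adaptation — a per-vertex failure probability of $2^{-k}$ from signature coincidence (mirroring Lemma 1), followed by a union bound over the $n$ vertices for the global statements (mirroring the corollary). The observation that the signatures remain independent and uniform conditioned on the beeping decisions is a worthwhile explicit justification that the paper leaves implicit.
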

\begin{remark}
In the emulation procedures, the for-loops are controlled by $k$ thus
the first item of Lemma \ref{emule} needs  knowledge of $n$ and $\varepsilon$,
the second item needs only  knowledge of $\varepsilon$ and the last
item needs  knowledge of $n$.
\end{remark}

\section{Computing the Degree of each Vertex in $BL$}

This section illustrates emulation procedures by applying
them to the Las Vegas degree algorithm  presented in Section \ref{degreecdcd}
which computes the degree of each vertex in  $B_{cd}L_{cd}$.
We obtain  a Monte Carlo Algorithm, denoted Algorithm 5',
 which  computes the degrees in the $BL$ model. 

We will need two new boolean variables $collision_B$ and
$collision_L$. 

First each vertex generates its signature $s$.
Then we modify Algorithm 5 as follows.

Collisions must be detected only in slot 1 of Algorithm 5. 
As is explained in Remark
\ref{listen}, in this slot, active vertices which do not beep listen.
Thus the instruction in slot 1:

{\bf if} $candidate$ {\bf then}\\
\hphantom{aaaaaa}  beep\\
\noindent
becomes:

{\bf if} $candidate$ {\bf then} $EmulateB_{cd}inBL(s,collision_B)$\\
\hphantom{aaa}{\bf Else} $EmulateL_{cd}inBL(k,collision_L);$

\noindent
and the first  instructions in
slot 2:

  $ic:=$internal collision;\\
     \hphantom{aa} $pc:=$ peripheral collision;\\
    
 become:

 $ic := collision_B$;\\ 
 \hphantom{aa} $pc := collision_L$. 

 The other instructions in the algorithm are not changed.

Finally, Algorithm 5' (a degree computation algorithm in $BL$)
is the concatenation of $gen(k)$ and Algorithm 5 modified as explained
above.
Then, we deduce from Lemma 6.1 the following results:
\begin{theorem}
For any graph $G$ of size $n$ and any $0<\varepsilon<1$:
\begin{itemize}
\item 
if $k=\lceil \log_2\left(\frac n\varepsilon\right) \rceil$,
Algorithm 5' computes the degrees in $G$ in 
$O\left((\log n + \Delta^2)(\log(\frac{n}{\varepsilon}))\right)$, 
and the result is correct with probability at least  $1-\varepsilon$.
\item
 If $k=\lceil \log_2\left(\frac 1 \varepsilon\right) \rceil$,
  each vertex $v$ computes its degree in 
$O\left((\log n + \Delta^2
)(\log(\frac1{\varepsilon}))\right)$, 
and the result is correct with probability at least  $1-\varepsilon$.
\item 
 If $k = \lceil 2\log_2(n) \rceil$,
Algorithm 5' computes the degrees in $G$ in 
$O\left((\log n+\Delta^2)\log n\right)$, and the result is correct 
with probability $1-o\left(\frac 1 n \right)$.
\end{itemize}
\end{theorem}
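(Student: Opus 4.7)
The plan is to combine two facts already established: (i) the Las Vegas degree algorithm in $B_{cd}L_{cd}$ (Algorithm 5 from Section \ref{degreecdcd}) terminates in $O(\log n + \Delta^2)$ phases w.h.p., with each phase using a constant number of slots, and (ii) Lemma \ref{emule}, which bounds the error probability of the emulation procedures $EmulateB_{cd}inBL$ and $EmulateL_{cd}inBL$ as a function of the parameter $k$. Since Algorithm 5' is obtained from Algorithm 5 by replacing each beep/listen in slot 1 by one call to an emulation procedure of length $2k$ slots, the slot complexity bound follows immediately: $O(\log n + \Delta^2)$ phases times $O(k)$ slots per emulated action gives $O\!\left((\log n + \Delta^2)\cdot k\right)$, and plugging in the three values of $k$ from the statement yields exactly the three complexity bounds claimed.

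For correctness, I would argue by a union bound separating the two sources of failure. Let $\mathcal{E}_1$ be the event that Algorithm 5 in $B_{cd}L_{cd}$ fails to compute the correct degree within the allotted $O(\log n + \Delta^2)$ phases; by Theorem 4.1 (the degree theorem for $B_{cd}L_{cd}$), $\Pr(\mathcal{E}_1) = o(1/n)$. Let $\mathcal{E}_2$ be the event that some emulation call during the execution produces an incorrect value of $collision_B$ or $collision_L$. By Lemma \ref{emule}, a single emulation call fails (at a given vertex or globally, depending on the regime) with the stated probability.

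For the first and third items, I would use the ``global correctness on $G$'' guarantee from Lemma \ref{emule}: each phase's emulation is correct on the whole graph with probability $1-\varepsilon$ (resp.\ $1-o(1/n)$), and since the number of phases is $O(\log n + \Delta^2)$, another union bound over phases shows the overall failure probability remains $O(\varepsilon)$ (resp.\ $o(1/n)$), which can be absorbed into $\varepsilon$ by adjusting constants in $k$. For the second (per-vertex) item, I would fix a vertex $v$ and union-bound only over the emulation calls involving $v$ or its $2$-neighbourhood across the $O(\log n + \Delta^2)$ phases, again rescaling $\varepsilon$ by a constant factor hidden in the $O(\cdot)$.

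The only mildly delicate point is conceptual rather than computational: one must check that the emulation error events are well-defined \emph{independent} of Algorithm 5's random choices — specifically, that an incorrect $collision$ bit in some phase could, in principle, cause Algorithm 5 to run for many more phases or to report a wrong degree, so the ``number of phases'' used in the union bound must be a deterministic upper bound (the $O(\log n + \Delta^2)$ w.h.p.\ bound from Theorem 4.1). I would handle this by conditioning: on the event $\overline{\mathcal{E}_1} \cap \overline{\mathcal{E}_2}$, Algorithm 5' behaves identically to a correctly-simulated Algorithm 5, so it terminates in the stated number of slots and outputs the correct degrees; the failure probability is then bounded by $\Pr(\mathcal{E}_1) + \Pr(\mathcal{E}_2)$, which gives the advertised bounds. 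No further calculation is needed beyond rescaling the constant inside the ceiling defining $k$.
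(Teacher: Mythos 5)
Your overall architecture (slot count $=$ phases of Algorithm 5 times $O(k)$ slots per emulated action, plus a conditioning argument separating the time bound of the Las Vegas algorithm from the emulation failure) matches what the paper intends when it says the theorem is ``deduced from Lemma 6.1''. However, there is a genuine gap in your correctness argument: the union bound over phases. You treat each emulation call as an independent failure event with probability $\varepsilon$ (resp.\ $o(1/n)$) and then sum over the $O(\log n + \Delta^2)$ phases, claiming the total ``remains $O(\varepsilon)$'' after ``adjusting constants in $k$''. But $O(\log n + \Delta^2)$ can be polynomial in $n$ (since $\Delta$ may be as large as $n-1$), so the union bound gives failure probability $O\left((\log n + \Delta^2)\varepsilon\right)$, which is not $O(\varepsilon)$; and the theorem fixes $k$ exactly as $\lceil \log_2(n/\varepsilon)\rceil$, so you are not free to inflate $k$ to compensate. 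For the second and third items the same problem appears with the stated $k=\lceil\log_2(1/\varepsilon)\rceil$ and $k=\lceil 2\log_2 n\rceil$.

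The missing observation is that in this paper the emulation does \emph{not} use fresh randomness at each call: each vertex draws its $k$-bit signature $s:=gen(k)$ exactly once, before the execution, and every invocation of $EmulateB_{cd}inBL$ and $EmulateL_{cd}inBL$ reuses that same $s$. Consequently the set of collisions that will ever be detected is determined at time zero: a collision between $u_1$ and $u_2$ is detected in \emph{every} phase if and only if $s_{u_1}\neq s_{u_2}$ in some bit. Hence ``all emulation calls are correct throughout the entire execution'' is a single event, the one whose probability Lemma \ref{emule} bounds by $1-\varepsilon$ (resp.\ $1-o(1/n)$), with no dependence on the number of phases. Conditioned on that event, Algorithm 5$'$ faithfully simulates Algorithm 5, which always outputs correct degrees (it is Las Vegas) and terminates within $O(\log n+\Delta^2)$ phases w.h.p.\ by the theorem of Section \ref{degreecdcd}; multiplying by the $O(k)$ slots per phase gives the three stated complexities. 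With this substitution your conditioning argument goes through and the per-phase union bound becomes unnecessary.
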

\begin{remark}
The same transformation can be done for the colouring or the $2$-hop-colouring
algorithms.
\end{remark}

\section{Conclusion}
We present in this paper algorithms which detect collisions in the 
weakest beeping model with a logarithmic complexity. 
Then we consider more powerful beeping models
which enable simple and efficient solutions to the colouring
problem, to the 2-hop-colouring problem and to the degree computation.
Finally, thanks to emulation procedures based on collision detection
we give solutions to these problems in the weakest beeping model 
having a time complexity increased by a logarithmic factor.
\bibliographystyle{alpha} \bibliography{biblio}

\end{document}